\keywords{SemGuS, SyGuS, Semantics, SMT, Program Synthesis}
\begin{document}
\title{Synthesizing Formal Semantics from Executable Interpreters}

\author{Jiangyi Liu}
\orcid{0000-0001-6525-4659}
\affiliation{%
  \institution{University of Wisconsin -- Madison}
  \city{Madison}
  \country{USA}
}
\email{jiangyi.liu@wisc.edu}

\author{Charlie Murphy}
\orcid{0000-0003-4813-7578}
\affiliation{%
  \institution{University of Wisconsin -- Madison}
  \city{Madison}
  \country{USA}
}
\email{tcmurphy4@wisc.edu}

\author{Anvay Grover}
\orcid{0009-0003-4820-3560}
\affiliation{%
  \institution{University of Wisconsin -- Madison}
  \city{Madison}
  \country{USA}
}
\email{anvayg@cs.wisc.edu}

\author{Keith J.C. Johnson}
\orcid{0000-0002-3766-5204}
\affiliation{%
  \institution{University of Wisconsin -- Madison}
  \city{Madison}
  \country{USA}
}
\email{keith.johnson@wisc.edu}

\author{Thomas Reps}
\orcid{0000-0002-5676-9949}
\affiliation{%
  \institution{University of Wisconsin -- Madison}
  \city{Madison}
  \country{USA}
}
\email{reps@cs.wisc.edu}

\author{Loris D’Antoni}
\orcid{0000-0001-9625-4037}
\affiliation{%
  \institution{University of California, San Diego}
  \city{La Jolla}
  \country{USA}
}
\email{ldantoni@ucsd.edu}

\newcommand{\toolname}{\textsc{Synantic}\xspace}
\newcommand{\keithtoolname}{\textsc{Ks2}\xspace}
\newcommand{\algname}{\textsc{SemSynth}\xspace}
\newcommand{\algSemSynth}{\textsc{SynthSemanticConstraint}\xspace}
\newcommand{\algVerify}{\textsc{Verify}\xspace}

\newcommand{\algSingle}{\textsc{SynthesizeForProduction}\xspace}
\newcommand{\algAll}{\textsc{SynthesizeForAllProductions}\xspace}
\newcommand{\algRefine}{\textsc{CompleteSummary}\xspace}
\newcommand{\algSolveSAT}{\textsc{CheckSat}\xspace}
\newcommand{\algSolveSygus}{\textsc{SolveSygus}\xspace}
\newcommand{\algInitSem}{\textsc{MakeInitialSummary}\xspace}
\newcommand{\algApproxSummary}{\textsc{ApproximateWithSummary}\xspace}
\newcommand{\assert}{\textsc{Assert}\xspace}
\newcommand{\runInterp}{\textsc{RunWithTimeLimit}\xspace}
\newcommand{\algGenSygus}{\textsc{GenerateSygus}\xspace}
\newcommand{\algSygus}{\textsc{FindCandidateSemantics}\xspace}
\newcommand{\algSygusMulti}{\textsc{FindCandidateSemanticsMulti}\xspace}
\newcommand{\algGenGrammar}{\textsc{GenerateGrammar}\xspace}

\let\oldbot\bot
\renewcommand{\bot}{\ensuremath{\mathit{false}}}
\renewcommand{\top}{\ensuremath{\mathit{true}}}

\newcommand{\mathdefault}[1][]{} 

\newcommand{\semgus}{SemGuS\xspace}
\newcommand{\sygus}{SyGuS\xspace}

\crefname{algocf}{Algorithm}{Algorithms}
\Crefname{algocf}{Algorithm}{Algorithms}


\newcommand{\twr}[1]{\textcolor{blue}{T: #1}}
\newcommand{\twrchanged}[1]{\textcolor{cyan}{#1}}

\newcommand{\loris}[1]{\textcolor{purple}{L: #1}}
\newcommand{\lorischanged}[1]{\textcolor{brown}{#1}}

\newcommand{\charlie}[1]{\textcolor{ForestGreen}{C: #1}}
\newcommand{\charliechanged}[1]{\textcolor{ForestGreen}{#1}}

\newcommand{\jiangyi}[1]{\textcolor{teal}{\textbf{J: #1}}}
\newcommand{\jiangyichanged}[1]{\textcolor{orange}{#1}}

\newcommand{\anvay}[1]{\textcolor{orange}{A: #1}}
\newcommand{\anvaychanged}[1]{\textcolor{orange}{#1}}


\newcommand{\rone}{(\emph{i})~}
\newcommand{\rtwo}{(\emph{ii})~}
\newcommand{\rthree}{(\emph{iii})~}
\newcommand{\rfour}{(\emph{iv})~}
\newcommand{\rfive}{(\emph{v})~}

\SetKwProg{Proc}{Procedure}{}{}
\SetKwRepeat{DoWhile}{do}{while}
\SetKw{Continue}{continue}
\SetKwInput{Assert}{assert}
\newcommand{\Break}{\textbf{break}}
\newcommand\mycommfont[1]{\footnotesize\ttfamily\textcolor{blue}{#1}}
\SetCommentSty{mycommfont}
\SetKwBlock{Glob}{Global Variables}{}

\newcommand{\NN}{\mathbb{N}}
\newcommand{\ZZ}{\mathbb{Z}}
\newcommand{\mcL}{\mathcal{L}}
\newcommand{\mcS}{\mathcal{S}}
\newcommand{\defeq}{~\triangleq~}

\newcommand{\semanticBrackets}[1]{\llbracket #1 \rrbracket}
\newcommand{\sem}[1]{\semanticBrackets{#1}}
\newcommand{\semanticBracketsUsing}[2]{\llbracket #1 \rrbracket (#2)}
\newcommand{\semanticBracketsWithNonterm}[3]{\llbracket #2 \rrbracket_{#1} (#3)}

\newcommand{\tok}[1]{\textcolor{blue}{\ensuremath{\mathtt{#1}}}}

\newcommand{\rk}{rk} 
\newcommand{\Sem}{Sem}

\newcommand\Tau{\mathrm{T}}
\newcommand{\inputType}{\theta}
\newcommand{\inputVal}{\ensuremath \mathit{in}}
\newcommand{\inputVar}[1][]{\ensuremath x^\inputVal_{#1}}
\newcommand{\outputType}{\tau}
\newcommand{\outputVal}{\ensuremath \mathit{out}}
\newcommand{\outputVar}[1][]{\ensuremath x^\outputVal_{#1}}
\newcommand{\theory}{\mathcal{T}}
\newcommand{\grammar}{G}
\newcommand{\interpreter}{\ensuremath \mathcal{I}}
\newcommand{\oracle}{\ensuremath \mathcal{E}}
\newcommand{\recpred}{\ensuremath P_\mathit{rec}}
\newcommand{\exampleset}{\ensuremath E}
\newcommand{\tuple}[1]{\ensuremath \left\langle #1 \right\rangle}
\newcommand{\Imp}[1][]{$\textsc{Imp}_{#1}$\xspace}
\newcommand{\ImpArr}{\textsc{ImpArr}}
\newcommand{\seq}{;}

\begin{abstract}
  Program verification and synthesis frameworks that allow one to customize the language in which one is interested typically require the user to provide a formally defined semantics for the language.
Because writing a formal semantics can be a daunting and error-prone task, this requirement stands in the way of such frameworks being adopted by non-expert users. 
We present an algorithm that can automatically synthesize inductively defined syntax-directed semantics when given \rone a grammar describing the syntax of a language and \rtwo an executable (closed-box) interpreter for computing the semantics of programs in the language of the grammar.
Our algorithm synthesizes the semantics in the form of Constrained-Horn Clauses (CHCs), a natural, extensible, and formal logical framework for specifying inductively defined relations that has recently received widespread adoption in program verification and synthesis.
The key innovation of our synthesis algorithm is a Counterexample-Guided Synthesis (CEGIS) approach that breaks the \textit{hard} problem of synthesizing a set of constrained Horn clauses into small, tractable expression-synthesis problems that can be dispatched to existing \sygus synthesizers.
Our tool \toolname synthesized inductively-defined formal semantics from {14} interpreters for languages used in program-synthesis applications.
When synthesizing formal semantics for one of our benchmarks, \toolname unveiled an  inconsistency in the semantics computed by the interpreter for a language of regular expressions; fixing the inconsistency resulted in a more efficient semantics and, for some cases, in a 1.2x speedup for a synthesizer solving synthesis problems over such a language.

\end{abstract}

\maketitle              

\section{Introduction} \label{sec:introduction}



Recent work on frameworks for program verification and program synthesis has created tools that are parametric in the language that is supported \cite{DBLP:journals/toplas/LimR13,DBLP:conf/setss/0002R19,kim2021semantics}.
A user of such a framework must define the language of interest by giving both a syntactic specification and a formal semantic specification of the language.
The semantic specification assigns a meaning to each program in the language.
However, for most programming languages, and even for simple ones used in program-synthesis applications, it is usually a demanding task to create a \textit{formal} semantics that defines the behaviors of the programs in the language.
Obstacles include:
\rone the language's semantics might only be documented in natural language, and thus may be ambiguous (or worse, inconsistent), and \rtwo the sheer level of detail that is involved in writing such a semantics.


\paragraph{Synthesizing Formal Semantics from Interpreters}
In this paper, we propose an alternative approach---based on synthesis---that is applicable to any programming language for which a compiler or interpreter exists.
Such infrastructure serves as an operational semantics for the language, albeit one for which anything other than closed-box access would be difficult.
Assuming existence of a working compiler or interpreter is not hard --- usually a language (typically not an “academic’’ language) already has an interpreter already implemented, and the language users, if they want to access techniques like verification and synthesis, need a formal semantics.
Thus, we take closed-box access as a given, and ask the following question:
\begin{quote}
  \textit{Is it possible to use an existing compiler or interpreter for a language $L$ to create a formal semantics for $L$ automatically?}
\end{quote}
In this paper, we assume that the given compiler or interpreter is capable of executing any program or subprogram in language $L$.

This question is natural, but answering it formally requires one to address two key challenges.

First, in what formalism should the formal semantics be expressed?
The right formalism should be expressive enough to capture common semantics, yet structured enough to allow synthesis to be possible.
Furthermore, the formalism should not be tied to any specific programming language---i.e., it should be language-agnostic.

Second, how can the synthesis problem be broken down into simple enough small problems for which one can design a practical approach?
The representation of the semantics of most programming languages is usually very large, and a monolithic synthesis approach that does not take advantage of the compositionality of semantics definitions is bound to fail.

\paragraph{Our Approach}
In this paper, we address both of these challenges and present an algorithm that can automatically synthesize an inductively defined syntax-directed semantics when given \rone a grammar describing the syntax of the language, and \rtwo an executable (closed-box) interpreter for computing the semantics of programs in the language on given inputs.

To address the first of the aforementioned challenges, we choose to synthesize the formal semantics in the form of Constrained Horn Clauses (CHCs), a well-studied fragment of first-order logic that already provides the foundation of \semgus~\cite{semgus-cav21,kim2021semantics}, a domain- and solver-agnostic framework for defining arbitrary synthesis problems.
CHCs can naturally express a big-step operational semantics, structured as an inductive definition over a language's abstract syntax, which makes them appropriate for compositional reasoning.

For example, the operational semantics for an assignment to a variable \tok{x} in an imperative programming language can be written as the following CHC:
    \begin{mathpar}
        \inferrule{
            \semanticBracketsUsing{e}{\mathit{s_1}} = r_1\\
            s_0 = s_1 \land r_0 = s_0[x \mapsto r_1]
        }{
            \semanticBracketsUsing{\tok{x}~\tok{:=}~e}{\mathit{s_0}} = r_0
        }
    \end{mathpar}
    The CHC is defined inductively in terms of the semantics of the child term $e$.

To address the second aforementioned challenge, we take advantage of the inductive structure of CHCs and design a synthesis algorithm that inductively synthesizes the semantics of programs in the grammar, starting from simple base constructs and moving up to more complex inductively-defined constructs.
For each construct in the language, our algorithm uses a counter-example-guided inductive synthesis (CEGIS) loop to synthesize the semantic rule---i.e., the CHC---for that construct.
%
For each construct, we use input-output valuations obtained by calling the closed-box interpreter to approximate the behavior of its child terms. 
Such an approximation allows us to synthesize the semantics 
construct-by-construct, rather than all at once, which converts the problem of synthesizing semantics into many smaller problems that only have to synthesize part of the overall semantics.

To evaluate our approach, we implemented it in a tool called \toolname.
Our evaluation of \toolname involved synthesizing the semantics for languages with a wide variety of features, including assignments, conditionals, while loops, bit-vector operations, and regular expressions.
The evaluation revealed that our approach not only can help synthesize semantics of non-trivial languages but can also help debug existing semantics.

\paragraph{Goals and No-goals} Our tool \toolname mainly targets users who want to use verification and synthesis techniques on an existing language. Once \toolname creates the semantics in \semgus format, a wide range of tools based on \semgus can be instantly applied \cite{semgus-cav24}. For example, such a way enables the user to get a synthesizer for the existing language \emph{for free}, because the creation of \semgus files requires minimal manual labor.
Also the original goal is helping \semgus users, our techniques are general and we envision they could potentially be applied to other semantic-specification frameworks (e.g., to help formalize semantics for use with a theorem prover). Synthesizing semantics of purely academic languages is a no-goal for our tool, because most of them already have a formal semantics available before the interpreter is implemented, thus creating the \semgus specifications would be trivial.

\paragraph{Contributions} Our work makes the following contributions:
\begin{itemize}
  \item
    We introduce a new kind of synthesis problem: the \emph{semantics-synthesis problem} (\Cref{sec:problem-definition}).
  \item
    We devise an algorithm for solving semantics-synthesis problems (\Cref{sec:algorithm}).
    In this algorithm, we harness an example-based program synthesizer (specifically a \sygus solver) to synthesize the constraint in each CHC. 
  \item 
    We implement our algorithm in a tool, called \toolname, which also supports an optimization for multi-output productions, i.e., productions whose semantic constraints include multiple output variables (\Cref{sec:implementation}).
  \item
    We evaluate \toolname on a range of different language benchmarks from the program-synthesis literature.
    For one benchmark, the \toolname-generated semantics revealed an inconsistency in the way the original semantics had been formalized. 
    Fixing the inconsistency in the semantics resulted in a more efficient semantics and 
    a speedup (in some case 1.2x) for a synthesizer solving synthesis problems over such a language
    (\Cref{sec:evaluation})
\end{itemize}
\Cref{sec:example} illustrates how our algorithm synthesizes the semantics of an imperative while-loop language.
\Cref{sec:related-work} discusses related work. \Cref{sec:conclusion} concludes.

References of the form Appendix A.1 refer to appendices that are available in the arXiv version of this paper \cite{paper-arxiv}.
\section{Illustrative Example}
\label{sec:example}

As discussed in \Cref{sec:introduction}, our technique synthesizes a semantic specification that is
compatible with the Semantics-Guided Synthesis (\semgus) format \cite{kim2021semantics}.
\semgus is a domain- and solver-agnostic framework for specifying program synthesis and verification problems \cite{semgus-cav21}.
A \semgus problem consists of three components
that the user must provide:
\rone a grammar specifying the syntax of programs;
\rtwo a semantics for every program
in the language of the grammar, provided as a set of Constrained Horn Clauses (CHCs) assigned to the productions of the grammar;
and \rthree a specification of the desired program that makes use of the semantic predicates.
Crucially, \semgus enables the development of general tools for program synthesis and verification, thus reducing the burden of creating such tools for custom languages \cite{semgus-cav24}. However, the stumbling block is that the end user must be able to provide a semantics of the language they are interested in working with, a task that can be burdensome and error-prone to perform by hand. In this section, we illustrate how our technique (implemented in \toolname) automatically synthesizes such a semantics for an imperative language \Imp (cf. \Cref{ex:imp-lang})---a simple but illustrative example of \toolname's abilities.


%
%
%

\begin{example}[Syntactic Definition of \Imp] \label{ex:imp-lang}
Consider the grammar $G_\text{\Imp[n]}$ that defines the syntax of \Imp for programs with $n$ variables $\tok{x_1}$, $\dots$, $\tok{x_n}$:
\begin{align*}
  S &\Coloneqq \tok{x_1 \coloneqq}~E \mid \dots \mid \tok{x_n \coloneqq}~E
     \mid S~\tok{\seq}~S \mid \tok{ite}~B~S~S \mid \tok{while}~B~\tok{do}~S \\
     &\mid \tok{do}~S~\tok{while}~B \mid \tok{repeat}~S~\tok{until}~B \\
  B &\coloneqq \tok{false} \mid \tok{true} \mid \tok{\neg}~B \mid B~\tok{\land}~B \mid B~\tok{\lor}~B \mid E~\tok{<}~E \\
  E &\coloneqq \tok{0} \mid \tok{1} \mid \tok{x_1} \mid \dots \mid \tok{x_n} \mid E~\tok{+}~E \mid E~\tok{-}~E 
\end{align*}

The \Imp language consists of arithmetic and Boolean expressions, statements for assignment to the variables $\tok{x_1}$ through $\tok{x_n}$, sequential composition, if-then-else, and various looping constructs.
\Imp also comes equipped with an executable interpreter 
$\interpreter_\text{\Imp}$ that assigns to each term $t \in \mcL(G)$ its standard (denotational) semantics (e.g., arithmetic and Boolean expressions are evaluated as in linear integer arithmetic, $\tok{x_i \coloneqq}~e$ takes as input a state, and outputs the input state with 
$x_i$'s value updated by the result of evaluating $e$,
etc.).
\end{example}

Suppose that we did not know the semantics of \Imp \emph{a priori}; that is, suppose that we only have access to the interpreter $\interpreter_\text{\Imp}$.
How can we synthesize a formal semantics for each program in $G_\text{\Imp}$ using the interpreter?
A na\"ive approach would randomly generate a large set of terms and inputs, and try to learn a function mapping inputs to outputs for each term. 
However, this approach would only provide a semantics for the enumerated terms, and fails to generalize to the entire language. A less na\"ive approach might attempt to form a monolithic synthesis problem to synthesize a semantic function for each production of the grammar that satisfies a set of generated example terms and input-output pairs. However, it is known that synthesizers scale exceptionally poorly in the size of the desired output \cite{alur2017scaling},
even for \Imp[1], which has only 17 productions, this approach would be practically impossible.

\paragraph{Nullary productions.}
One of the key innovations of our approach is that we synthesize the semantics on a per-production basis, i.e., working one production at a time.
We start by synthesizing a semantics for nullary (leaf) productions. For \Imp[1], this means we synthesize a semantics for the productions $\tok{0}$, $\tok{1}$, $\tok{x_1}$, $\tok{false}$, and $\tok{true}$ before we synthesize the semantics of any other productions. For a nullary production $\tok{p}$, we synthesize a semantics of the form:
\[
  \inferrule*{
    \outputVar[0] = f(\inputVar[0])
  }{
    \mathit{Sem}(\tok{p}, \inputVar[0], \outputVar[0])
  }
\]
which states that, because the term $\tok{p}$ has no sub-terms, the output is only a function of the input $\inputVar[0]$. In our approach, we use a Counter-Example-Guided Synthesis (CEGIS) approach to synthesize a function $f$ that captures the behavior of $\interpreter_\text{\Imp}$ on production $p$. Within the CEGIS loop, we synthesize a candidate function $f$, then verify if it is consistent with $\interpreter_\text{\Imp}$ (e.g., on a larger number of inputs $\inputVar[0]$). If $f$ is consistent, then we have successfully learned the semantics of $p$;
otherwise, the verifier generates a counter-example and a new candidate semantic function $f$.

\paragraph{Inductively synthesizing semantics.}
Next, our approach synthesizes the semantics for other arithmetic and Boolean expressions. In this step, we inductively synthesize the semantics of productions by reusing the semantics of previously learned productions to learn the semantics of new productions. At this point, we may assume that we know the semantics of all nullary productions. For instance, suppose that we wish to next learn the semantics of $\tok{+}$.
At first, our algorithm generates examples favoring terms like $\tok{1 + 1}$, $\tok{x + 1}$, etc.\ that contains sub-terms whose semantics have already been learned.
For $\tok{t_1 + t_2}$, our algorithm generates a semantics that can rely on the semantics of its sub-terms $\tok{t_1}$ and $\tok{t_2}$. Specifically, the semantics of $\tok{t_1 + t_2}$ takes the following form:
\[\inferrule*{
\mathit{sem}(\tok{t_1}, \inputVar[1], \outputVar[1]) \qquad
\mathit{sem}(\tok{t_2}, \inputVar[2], \outputVar[2])\\
\inputVar[1] = f_1(\inputVar[0]) \\
\inputVar[2] = f_2(\inputVar[0], \outputVar[1]) \\
\outputVar[0] = f_0(\inputVar[0], \outputVar[1], \outputVar[2])
}{
\mathit{sem}(\tok{t_1 + t_2}, \inputVar[0], \outputVar[1]) 
}\]
which states that the semantics of $\tok{t_1 + t_2}$ is inductively defined in terms of the semantics of $\tok{t_1}$ and the semantics of $\tok{t_2}$. The semantics enforces a left-to-right evaluation order:\footnote{We show how to overcome this restriction in \Cref{sec:implementation:sem-synth}.}
the rule expresses that the input to $\tok{t_1}$, $\inputVar[1]$, is a function of $\tok{t_1 + t_2}$'s input, $\inputVar[0]$, and similarly that $\tok{t_2}$'s input, $\inputVar[2]$, is a function of $\tok{t_1 + t_2}$'s input, $\inputVar[0]$, and $\tok{t_1}$'s output, $\outputVar[1]$.
Finally, it also expresses that the $\tok{t_1 + t_2}$'s output, $\outputVar[0]$, is a function of its input, $\inputVar[0]$, and the outputs of $\tok{t_1}$ ($\outputVar[1]$) and $\tok{t_2}$ ($\outputVar[2]$).

When the semantics of a sub-term $\tok{t_i}$ is known (e.g., for nullary productions), we substitute its learned semantics for $\mathit{sem}(\tok{t_i}, \inputVar[i], \outputVar[i])$;
otherwise, we approximate its semantics using examples.
Again, we use a CEGIS loop to generate examples for the entire term $\tok{t_1 + t_2}$, as well as any sub-terms whose exact semantics have not yet been synthesized (e.g., for a sub-term that uses $\tok{+}$ or $\tok{-}$). The process proceeds analogously for most other productions in \Imp.

\paragraph{Semantically recursive productions.}
The final interesting case is for $\tok{while}$ loops, for which the semantics is recursive on the term itself.
For semantically recursive productions, we assume that the semantics can make a recursive call (i.e., effectively acting as if the term itself is a sub-term). We additionally synthesize a predicate determining if the recursive call should be made or not. For $\tok{while~b~do~s}$, we synthesize two semantic rules, one in which the recursive call is made, and one in which it is not.
\resizebox{\textwidth}{!}{
\begin{minipage}{1.2\textwidth}
\begin{mathpar}
\inferrule*{
  \mathit{sem}(\tok{b}, \inputVar[1], \outputVar[1]) \qquad
  \mathit{sem}(\tok{s}, \inputVar[2], \outputVar[2]) \qquad
  \neg \mathit{Pred}_{\mathit{rec}}(\inputVar[0], \outputVar[1], \outputVar[2])\\
  \inputVar[1] = f_1(\inputVar[0]) \\
  \inputVar[2] = f_2(\inputVar[0], \outputVar[1]) \\
  \outputVar[0] = f_0(\inputVar[0], \outputVar[1], \outputVar[2])
}{
  \mathit{sem}(\tok{while~b~do~s}, \inputVar[0], \outputVar[1]) 
}

\inferrule*{
  \mathit{sem}(\tok{b}, \inputVar[1], \outputVar[1]) \\
  \mathit{sem}(\tok{s}, \inputVar[2], \outputVar[2])\\
  \mathit{sem}(\tok{while~b~do~s}, \inputVar[3], \outputVar[3])\\
  \mathit{Pred}_{\mathit{rec}}(\inputVar[0], \outputVar[1], \outputVar[2])\\
  \inputVar[1] = f_1(\inputVar[0]) \\
  \inputVar[2] = f_2(\inputVar[0], \outputVar[1]) \\
  \inputVar[3] = f_2(\inputVar[0], \outputVar[1], \outputVar[2]) \\
  \outputVar[0] = f_0(\inputVar[0], \outputVar[1], \outputVar[2], \outputVar[3])
}{
  \mathit{sem}(\tok{while~b~do~s}, \inputVar[0], \outputVar[1]) 
}
\end{mathpar}
\end{minipage}
}

As with the previous productions, our algorithm
uses a CEGIS loop
to synthesize a candidate semantics of the above form, verify its correctness, and generate a counter-example if the candidate semantics is incorrect.
While we may employ learned semantics for sub-terms, recursive calls to a sub-term must be approximated using examples because we are still in the process of learning its semantics.
We formally define the semantics-synthesis problem that we solve in \Cref{sec:problem-definition} and
explain how our synthesis algorithm
works in \Cref{sec:algorithm}.

\paragraph{Multi-output productions.}
In the above $\tok{while}$-loop example, we saw that the function $f_0$ had four inputs that must be considered when synthesizing a term to instantiate $f_0$. As the number of input variables and the size of the desired result grows, synthesis scales poorly. In the above examples, the notation is not showing the full picture. For \Imp[n] all input and (most) output variables are an $n$-tuple of variables representing a state of an \Imp[n] program. Even for just \Imp[2], $f_0$ has twice as many inputs. 

To address this problem, we allow synthesizing the semantics of each output of a production independently. For example, consider the production $\tok{x_0 \coloneqq t}$ (for \Imp[2]).
We generate a semantics using two constraints $F$ and $G$, independently. The constraint $F$ (resp. $G$) represents the pair of functions $f_0$ and $f_1$ (resp. $g_0$ and $g_1$). 
\begin{mathpar}
\inferrule*[Right=$F$]{
  \mathit{sem}(\tok{t}, \inputVar[1], \inputVar[1]) \\ \inputVar[1] = f_1(\inputVar[0]) \\ {\outputVar[0]} = f_0(\inputVar[0], \outputVar[1])
}{
  \mathit{sem}(\tok{x_0 \coloneqq t}, \inputVar[0], {\outputVar[0]})
}

\inferrule*[Right=$G$]{
  \mathit{sem}(\tok{t}, \inputVar[1], \inputVar[1]) \\ \inputVar[1] = g_1(\inputVar[0]) \\ {\outputVar[0]} = g_0(\inputVar[0], \outputVar[1])
}{
  \mathit{sem}(\tok{x_0 \coloneqq t}, \inputVar[0], {\outputVar[0]})
}
\end{mathpar}
By independently synthesizing $F$ and $G$, we reduce the burden on the underlying synthesizer; however, now the synthesizer is allowed to return
an $F$ and $G$ for which $f_1 \neq g_1$.
Thus, $F$ and $G$ have inconsistent inputs being provided to the child-term $t$. We use an SMT solver to determine if $f_1$ and $g_1$ are consistent for each of the example inputs to the term $\tok{x_0 \coloneqq t}$. If so, we will return either $f_0, g_0, f_1$ (or $f_0, g_0, g_1$ because $f_1$ and $g_1$ are consistent on all examples---i.e., when evaluated on the same example they return equal outputs---otherwise, we discover that $f_1$ and $g_1$ are inconsistent on some input and add a new constraint to ensure that the same pair of functions $f_1$ and $g_1$ cannot be synthesized again.
This optimization is further discussed in \Cref{sec:implementation:optimization}.

\section{Problem Definition} \label{sec:problem-definition}
In this paper, we consider the problem of synthesizing a formal logical semantics for a deterministic language from an executable interpreter.
While there are many possible ways to logically define a semantics, we are interested in an approach that is language-agnostic and inductive.
The \semgus synthesis framework has proposed using Constrained Horn Clauses as a way of defining program semantics that meets both of our desiderata. 
Concretely, \semgus already supports synthesis for a large number of languages (which we consider in our experimental evaluation) by allowing a user to provide a user-defined semantics. 
As mentioned above, in \semgus, semantics are defined inductively on the structure of the grammar (i.e., per production/language construct) using logical relations represented as Constrained Horn Clauses (CHCs) \cite{kim2021semantics}.
In this paper, we follow suit and address the problem of learning a semantics of this form from an executable interpreter for the given language.
This section formalizes the semantics-synthesis problem that we consider. We begin by detailing our representation of syntax (\Cref{sec:problem-definition:syntax}), interpreters (\Cref{sec:problem-definition:interpreter}),
semantics (\Cref{sec:problem-definition:semantics}), and semantic-equivalence oracles (\Cref{sec:problem-definition:oracle}). Finally, we formalize the semantics-synthesis problem in \Cref{sec:problem-definition:problem}.

\subsection{Syntax} \label{sec:problem-definition:syntax}
We consider languages represented as regular tree grammars (RTGs).
A \textbf{ranked alphabet} is a tuple $\tuple{\Sigma, rk_\Sigma}$ that consists of a finite set of symbols $\Sigma$ and a function $\rk_\Sigma : \Sigma \to \NN$ that associates every symbol with a rank (or arity). For any $n \geq 0$, $\Sigma^n \subseteq \Sigma$ denotes the set of symbols of rank $n$. The set of all \emph{(ranked) Trees} over $\Sigma$ is denoted by $T_\Sigma$. Specifically, $T_\Sigma$ is the least set such that $\Sigma^0 \subseteq T_\Sigma$ and if $\sigma^k \in \Sigma^k$ and $t_1,\dots,t_k \in T_\Sigma$, then $\sigma^k(t_1,\dots,t_k) \in T_\Sigma$.
In the remainder of the paper, we assume a fixed ranked alphabet $\tuple{\Sigma, \mathit{rk}_\Sigma}$.

A \textbf{typed regular tree grammar} (RTG) is a tuple $\grammar = \tuple{N, \Sigma, \delta, \Tau, \inputType, \outputType}$, where $N$ is a finite set of non-terminal symbols of rank 0, $\Sigma$ is a ranked alphabet, $\delta$ is a set of productions over a set of types $\Tau$,
and for each non-terminal $A \in N$, and $\inputType_A$ (resp. $\outputType_A$) assigns $A$ an input-type (resp. output-type) from $\Tau$. Each production in $\delta$ takes the form:
\[
  A_0 \to \sigma\left(A_1, A_2, \ldots, A_{\rk_\Sigma(\sigma)}\right)
\]
where $A_i \in N$ and $\sigma \in \Sigma$. We use $\mcL(A)$ to denote the language of non-terminal $A$ and $\delta(A)$ the set of all productions associated with $A$ (i.e., all productions where $A_0$ is $A$). In the remainder, we assume a fixed grammar $\grammar = \tuple{N, \Sigma, \delta, \Tau, \inputType, \outputType}$.

\begin{example}[$G_\text{\Imp}$ as a Regular Tree Grammar]\label{ex:problem-definition:syntax}
Consider the \Imp language detailed in \Cref{sec:example}, $G_\text{\Imp}$ is a regular tree grammar that has been stylized to ease readability. For example, the non-terminals consist of the rank-0
symbols $E$, $B$, and $S$. The productions include $S \to \tok{x_1 \coloneqq}(E)$, $S \to \tok{\seq}(S, S)$, and $S \to \tok{while}(B, S)$. For \Imp[2] (\Imp with two variables $x_1$ and $x_2$), $\inputType_E$ is the type $\mathbb{Z} \times \mathbb{Z}$, representing the state of the two variables, and $\outputType_E$ is $\mathbb{Z}$, representing the return type of arithmetic expressions.
\end{example}

\subsection{Interpreters} \label{sec:problem-definition:interpreter}
We consider a class of deterministic executable interpreters---i.e., a program evaluator for which we may only observe input-output behavior.

\begin{definition}[Interpreter] \label{def:problem-definition:interpreter}
Formally, an \emph{interpreter} for $\grammar$ maps each non-terminal $A \in N$ to a
partial function $\interpreter_A : (\mcL(A) \times \inputType_A) \rightarrow \outputType_A$---with the
interpretation that the interpreter maps a program $t \in \mcL(A)$ and input value
$\inputVal \in \inputType_A$ to some output $\outputVal \in \outputType_A$
if and only if $t$ starting with the input value $\inputVal$ terminates with the
output value $\outputVal$.
\end{definition}

\begin{example}[Interpreters for {\Imp[1]}] \label{ex:problem-definition:interpreter}
Recall the \Imp language defined in \Cref{sec:example}. The interpreter $\interpreter$ for \Imp consists of three base interpreters $\interpreter_E$, $\interpreter_B$, and $\interpreter_S$, which are used to evaluate arithmetic expressions, Boolean expressions, and statements, respectively. Throughout this paper, we assume the interpreters for \Imp[1] (and all \Imp variants) evaluate according to the standard denotational semantics (e.g., $\tok{0}$ is the expression that always returns $0$ regardless of input state;
$\tok{+}$ is mathematical $+$;
$\tok{while~b~s}$ evaluates $b$, executes the loop body, and recurses if $\tok{b}$ evaluates to $\mathit{true}$
and
otherwise immediately terminates; etc.).
\end{example}

\subsection{Semantics} \label{sec:problem-definition:semantics}
We represent the big-step semantics of a language (defined by some grammar $G$) using a set of Constrained Horn Clauses (CHCs) within some background theory $\theory$ per production. While CHCs (at first glance) seem limiting, this formulation of semantics has been employed by the \semgus framework to represent user-defined semantics for many languages \cite{kim2021semantics,semgus-cav21}, including many variations of \Imp, regular expressions, \sygus expressions within the theory of bit vectors, algebraic data types, linear integer arithmetic. 

\begin{definition}[Constrained Horn Clause] \label{def:chc}
A CHC (in theory $\theory$) is a first-order formula of the form:
\[\forall \bar{x}_1, \dots, \bar{x}_n, \bar{x}.~\phi \land R_1(\bar{x}_1) \land \dots \land R_n(\bar{x}_n) \Rightarrow H(\bar{x})\]
where $R_1, \dots, R_n$ and $H$ are uninterpreted relations, $\bar{x}_1, \dots, \bar{x}_n$ and $\bar{x}$ are variables, and $\phi$ is a quantifier-free $\theory$-constraint over the variables.    
\end{definition}

To specify the big-step semantics of a non-terminal $A \in N$ (for which the interpreter has type $\interpreter_A : (\mcL(A) \times \inputType_A) \rightarrow \outputType_A$), we introduce the semantic relation $\mathit{Sem}_A(t_A, \inputVar[A], \outputVar[A])$, where $t_A$ is a variable representing elements of $\mcL(A)$, $\inputVar[A]$ is a variable of type $\inputType_A$, and $\outputVar[A]$ is a variable of type $\outputType_A$. Throughout this paper, we may also use $\sem{t_A}_{\mathit{Sem}}(\inputVar[A]) = \outputVar[A]$
to denote that $\mathit{Sem}_A(t_A, \inputVar[A], \outputVar[A])$ holds.

\begin{example}[Semantic relations]\label{ex:sem-relations}
Consider the \Imp[1] language introduced in Section~\ref{sec:example}; a semantics for \Imp[1] uses the semantic relations:
\begin{equation*}
\mathit{Sem}_E: \mcL{(E)} \times \ZZ \times \ZZ \to \mathtt{bool} \qquad
\mathit{Sem}_B: \mcL{(B)} \times \ZZ \times \mathtt{bool} \to \mathtt{bool} \qquad
\mathit{Sem}_S: \mcL{(S)} \times \ZZ \times \ZZ \to \mathtt{bool}
\end{equation*}
\end{example}

While CHCs are quite general and capable of defining both deterministic and non-deterministic semantics, we limit our scope to CHCs that represent deterministic semantics. Furthermore, for a grammar $G$, we assume that each production $A_0 \to \sigma(A_1, \dots, A_n) \in G$ evaluates sub-terms in a fixed order from left to right (i.e., for a term $p(t_1, \dots, t_n)$ sub-term $t_1$ is evaluated before $t_2$, etc.). While this
imposed order may seem too restrictive, we later show how this restriction can be lifted by considering all permutations of sub-terms.

\begin{definition}[Semantic Rule, Semantic Constraint] \label{def:problem-definition:semantic-rule}
Given a production $A_0 \to p(A_1, \dots, A_n)$ a \textbf{semantic rule} for $p$ is a CHC of the form:
\begin{equation}\label{eqn:semantic-form}
\inferrule*{
  \mathit{Sem}_{A_1}(t_1, \inputVar[1], \outputVar[1]) \\
  \dots \\
  \mathit{Sem}_{A_n}(t_n, \inputVar[n], \outputVar[n]) \\
  F(\inputVar[0], \dots, \inputVar[n], \outputVar[0], \dots, \outputVar[n]) \\
}{
  \mathit{Sem}_{A_n}(p(t_1, \dots, t_n), \inputVar[0], \outputVar[0])
}
\end{equation}
where $F$ is constraint over theory $\theory$, which we call a \textbf{semantic constraint}, that takes the form:
\begin{equation} \label{eqn:semantic-constraint}
  \inputVar[1] = f_1(\inputVar[0]) \land \dots \land \inputVar[n] = f_n(\outputVar[1], \dots, \outputVar[n-1], \inputVar[0]) \land \outputVar[0] = f_0(\outputVar[1], \dots, \outputVar[n], \inputVar[0]) \land P(\inputVar[0], \outputVar[0], \dots, \outputVar[n])
\end{equation}
where each $f_i$ is a function that returns a term of type $\inputType_{A_i}$ for $i > 0$ and $\outputType_{A_0}$ for $i = 0$. The semantic constraint also includes predicate $P(\inputVar[A_0], \outputVar[A_1], \dots, \outputVar[A_n])$ that determines when the semantic rule is valid (e.g., for conditionals and loops).
\end{definition}

\begin{example}[Semantics of $\tok{do\_while}$]
We give the semantics of the $\tok{do\_while}$ \Imp statement below:
\begin{mathpar}
\inferrule{
  \sem{s}(x_1) = x_1' \\
  \sem{b}(x_2) = r_b \\
  \sem{\tok{do}~s~\tok{while}~b}(x_3) = x_3'\\
  r_b \\ x_1 = x_0 \\ x_2 = x_1' \\ x_3 = x_1' \\ x_0' = x_3'
}{
  \sem{\tok{do}~s~\tok{while}~b}(x_0) = x_0'
}

\inferrule{
  \sem{s}(x_1) = x_1' \\
  \sem{b}(x_2) = r_b \\
  \neg r_b \\ x_1 = x_0 \\ x_2 = x_1' \\ x_0' = x_1'
}{
  \sem{\tok{do}~s~\tok{while}~b}(x_0) = x_0'
}
\end{mathpar}
The first rule executes the statement $s$ and then, if the guard $b$ is true recursively executes the whole loop and returns the resulting value. 
The second rule executes the statement $s$ and then, if the guard $b$ is false returns the output produced when executing the statement $s$.
\end{example}

\subsection{Equivalence Oracle and Semantics Synthesis Problem}\label{sec:problem-definition:oracle} \label{sec:problem-definition:problem}
For a grammar $G$, a semantics $\mathit{Sem}$ for $G$, and an interpreter $\interpreter$ for $G$, an \emph{equivalence oracle} is used to determine whether $\mathit{Sem}$ is equivalent to the semantics defined by the interpreter $\interpreter$.

\begin{definition}[Equivalent, Equivalence Oracle] \label{def:problem-definition:oracle}
Given an interpreter $\interpreter$ for a language $G$, a subgrammar $G' \subseteq G$, and a semantics $\mathit{Sem}$ for $G'$,  we say that $\interpreter$ and $\mathit{Sem}$ are \textbf{equivalent} on $G'$ if and only if for every term $t \in \mcL(G')$, input $\inputVal \in \inputType_A$, and output $\outputVal \in \outputType$, we have:
\[I(t, \inputVal) = \outputVal \Leftrightarrow \sem{t}_\mathit{Sem}(\inputVal) = \outputVal\]

An \textbf{equivalence oracle} $\oracle$ for $\interpreter$ is a function that takes as input a semantics $\mathit{Sem}$ for $G'$ and determines if $\mathit{Sem}$ is equivalent to $\interpreter$ on $G'$. If $\mathit{Sem}$ is not equivalent to $\interpreter$, then $\oracle$ returns an example $\tuple{\inputVal, t, \outputVal}$ for which $\interpreter$ and $\mathit{Sem}$ disagree---i.e., there is some term $t$ and input $\inputVal$ such that $\sem{t}_\mathit{Sem}(\inputVal) \neq \sem{t}_\interpreter(\inputVal)$---and otherwise returns $\mathit{None}$ when $\mathit{Sem}$ and $\interpreter$ are equivalent.
\end{definition}

Given a language (a grammar and accompanying interpreter), the semantics synthesis problem is to find some semantics of the language that is equivalent to the interpreter. We formalize the semantics synthesis problem as follows:

\begin{definition}[Semantics-Synthesis Problem, Solution] \label{def:problem-definition}
A \textbf{semantics-synthesis problem} is a tuple $\mathcal{P} \defeq \tuple{G, \interpreter, \oracle}$, where $G$ is a grammar, $\interpreter$ is an interpreter for $G$, and $\oracle$ is an equivalence oracle for $\interpreter$. A \textbf{solution} to the semantics-synthesis problem $\mathcal{P}$ is a semantics $\mathit{Sem}$ for $G$ that is equivalent to $\interpreter$ as determined by $\oracle$.
\end{definition}

\section{Semantics Synthesis}
\label{sec:algorithm}

This section presents an algorithm \algname (Algorithm~\ref{alg:sem-synth}) to synthesize a semantics for a language from an executable interpreter. The input to \algname is a semantics-synthesis problem consisting of \rone a grammar $G$, \rtwo an executable interpreter $\interpreter$ for $G$, and \rthree an equivalence oracle $\oracle$ for $\interpreter$. Upon termination, \algname returns a semantics $\mathit{Sem}$ for $G$ that is equivalent to the executable interpreter $\interpreter$ as determined by the equivalence oracle $\oracle$.

Synthesizing a semantics for arbitrary languages comes with several challenges.
In general, semantics are defined as complex recursively defined functions that provide an interpretation to every program within the language. 
Trying to directly synthesize such a semantics is already impractical for relatively small languages, such as the \Imp language defined in \Cref{ex:imp-lang}.

As described in \Cref{sec:problem-definition:semantics}, we consider semantics represented using logical relations defined by a set of Constrained Horn Clauses per production of $G$ (cf. \Cref{def:problem-definition:semantic-rule}). By formulating the desired semantics as CHCs per production, \algname can synthesize the semantics of $G$ one production at a time.
In fact, because \algname uses examples to approximate the semantics of all sub-terms during synthesis (cf. \Cref{sec:alg:overview}), \algname can synthesize the semantics of each production independently.
Finally, by fixing the shape of the semantics (i.e., as a set of CHCs per production), \algname reduces the monolithic synthesis problem to a series of first-order synthesis problems---specifically, by using a \sygus or sketch-based synthesizer to synthesize the constraint of each semantic rule (CHC) defining the semantics of a production.

The remainder of this section is structured as follows: \Cref{sec:alg:overview} provides a high-level overview of how \algname solves semantic-synthesis problems, \Cref{sec:alg:base-synth,sec:alg:verify} provide specifications for \algSemSynth and \algVerify, which synthesize semantic constraints from examples and verify candidate semantic constraints against the interpreter, respectively. Finally, \Cref{sec:alg:sem-recursive} explains how \algname handles semantically recursive productions.

\begin{algorithm}[t]
\Proc{\upshape\algname($G$, $\interpreter$, $\oracle$)}{
  \ForEach{production $p$ of $G$}{
    $\exampleset \gets \emptyset$ \tcp*{Example Set for Production $p$}
    \DoWhile{$\mathit{CEX} \neq \emptyset$}{
      $\mathit{Sem}[p] \gets \algSemSynth(p, E)$
      \tcp*{Get candidate semantics}
      \label{Li:CallToSynthSemanticEq}
      $\mathit{CEX} \gets \algVerify(\mathit{Sem}[p], p, \interpreter, \oracle)$
      \tcp*{Check candidate semantics}
      \label{Li:CallToVerify}
      \uIf{$CEX \neq \emptyset$}{
        $\exampleset \gets \exampleset \cup \mathit{CEX}$ \tcp*{Update example set}
      }
    }
  }
  \Return{$\mathit{Sem}$}\;
}
\caption{Semantics-Synthesis Algorithm} \label{alg:sem-synth}
\end{algorithm}

\subsection{Overview of \algname} \label{sec:alg:overview}

\algname (Algorithm~\ref{alg:sem-synth}) uses the counter-example-guided synthesis (CEGIS) paradigm to synthesize a semantics for $G$ that is equivalent to $\interpreter$ according to the equivalence oracle $\oracle$. Throughout this section, we will use the \Imp language from \Cref{ex:imp-lang} to illustrate how \algname operates.

\paragraph{Synthesizing a Candidate Semantics.}
After initialization, \algname synthesizes the semantics of each production. \algname employs a CEGIS loop to synthesize the semantics of each production. During each iteration, \algname first synthesizes a candidate semantic constraint (cf. \Cref{def:problem-definition:semantic-rule}) for production $p$ using \algSemSynth. The procedure \algSemSynth returns some semantic constraint for $p$ that satisfies the set of examples $\exampleset$. \Cref{sec:alg:base-synth} provides a formal specification of \algSemSynth's operation.

\algname then uses the procedure \algVerify to determine if the semantics synthesized for production $p$ is consistent with the interpreter $\mathcal{I}$ as determined by the equivalence oracle $\oracle$. A formal specification of \algVerify is provided in \Cref{sec:alg:verify}.
If \algVerify determines that the candidate semantics of $p$ is correct, then \algVerify returns an empty set of examples and \algname proceeds to synthesize the semantics of the next production. Otherwise, if \algVerify determines that the candidate semantics of $p$ is not equivalent to the interpreter $\interpreter$, \algVerify returns a set of examples. The new examples are added to the example set $\exampleset$, and the CEGIS loop repeats and synthesizes a new candidate semantics for $p$.

\subsection{Specification of \algSemSynth} \label{sec:alg:base-synth}
Before formally specifying \algSemSynth (\Cref{sec:alg:base-sem-spec}), we first define example sets (\Cref{sec:alg:example-set}) and when a semantic constraint is consistent with an example set (\Cref{sec:alg:example-consistency}).

\subsubsection{Example Sets} \label{sec:alg:example-set}
For an interpreter $\interpreter$, an example set $E$ is a set of examples consistent with $\interpreter$.

\begin{definition}[Example set for interpreter $\interpreter$] \label{def:alg:example-set}
Given an interpreter $\interpreter$ for grammar $G$, an example set $\exampleset$ for interpreter $\interpreter$ is a finite set of examples of the form $\tuple{\inputVal, t, \outputVal}$, where $t \in L(G)$ and $\interpreter(t, \inputVal) = \outputVal$.
\end{definition}

\begin{example}[Example set for {\Imp[1]}] \label{ex:alg:example-set}
Recall the interpreter $\interpreter_\text{\Imp[1]}$ described in \Cref{ex:problem-definition:interpreter} for language \Imp[1].
An example set $E$ for $\interpreter_\text{\Imp[1]}$ might include the examples $\tuple{0, \tok{0}, 0}$, $\tuple{1, \tok{0}, 0}$, $\tuple{1, \tok{x \coloneqq 0; x \coloneqq x + 4}, 4}$, and $\tuple{10, \tok{while~0 < x~do~x \coloneqq x - 1}, 0}$; however, an example set for $\interpreter_\text{Imp}$ could not include any example of the form $\tuple{n, \tok{while~0 < x~do~x \coloneqq x + 1}, n'}$ where $n$ (the initial value of $\tok{x}$) is some positive number. Since, $\tok{while~0 < x~do~x \coloneqq x + 1}$ would not terminate on the input $n$. The example $\tuple{n, \tok{while~0 < x~do~x \coloneqq x + 1}, n'}$ would violate the assumption that $E$ only contains examples consistent with the interpreter $\interpreter_\text{\Imp}$.
\end{example}

\subsubsection{Example Consistency} \label{sec:alg:example-consistency}
In \algname, we use the example set $E$ to ensure that the semantic constraint returned by \algSemSynth is consistent with $\interpreter$ for at least the examples appearing in $E$.

\begin{definition}[Consistency with Example Set] \label{def:alg:example-consistency}
Given a production $A_0 \to p(A_1, \dots, A_n)$, a semantic rule $R$ with semantic constraint $F$ of the form defined in \Cref{def:problem-definition:semantic-rule}, and example set $E$, we say $R$ is \textbf{consistent} with $E$ if and only if the semantic constraint $F$ is consistent with $E$. Furthermore, the semantic constraint $F$ is \textbf{consistent} with the example set $E$ if for every example $\tuple{\inputVal_{A_0}, p(t_1, \dots, t_n), \outputVal_{A_0}} \in E$ the following condition holds:
\begin{equation} \label{eqn:semantic-constraint-condition}
  \forall \inputVar[0], \dots, \inputVar[n], \outputVar[0], \dots, \outputVar[n].~
  \left(\begin{array}{r@{\hspace{1.0ex}}l}
          & \inputVar[0] = \inputVal_{0} \\
    \land & \mathit{Summary}(t_1) \\
          & \dots \\
    \land & \mathit{Summary}(t_n) \\
    \land & F
  \end{array}\right) \Rightarrow \outputVar[0] = \outputVal_{0}
\end{equation}
where $\mathit{Summary}(t_i) = \bigvee \{{\inputVar[i]} = \inputVal_{i} \land {\outputVar[i]} = \outputVal_{i} : \tuple{\inputVal_i, t_i, \outputVal_i} \in E\}$ summarizes the semantics of $t_i$ according to the examples found in $E$.
\end{definition}

\begin{example}[Example Consistency] \label{ex:examplec-consistent}
Consider the production for the operator $\tok{+}$, and the (correct) semantic constraint $F \defeq \inputVar[1] = \inputVar[0] \land \inputVar[2] = \inputVar[0] \land \outputVar[0] = \outputVar[1] + \outputVar[2]$; $F$ is consistent with the examples $\tuple{0, \tok{x_0 + 1}, 1}$, $\tuple{0, \tok{x_0}, 0}$, and $\tuple{0, \tok{1}, 1}$. Specifically, the following formula is valid:
\[
\forall \inputVar[0], \inputVar[1], \inputVar[2], \outputVar[0], \outputVar[1], \outputVar[2].~(
 \inputVar[0] = 0 \land (\inputVar[1] = 0 \land \outputVar[1] = 0) \land (\inputVar[1] = 0 \land \outputVar[1] = 1) \land F
) \Rightarrow \outputVar[0] = 1.
\]
\end{example}

\subsubsection{Formal Specification of \algSemSynth} \label{sec:alg:base-sem-spec}
The procedure \algSemSynth takes as input the production $p$ whose semantics is to be synthesized and the current example set $E$;
it returns a constraint $F$---of the form defined in \Cref{def:problem-definition:semantic-rule}---defining a semantics for production $p$ that is consistent with the example set $E$.

\begin{example}[Synthesizing semantics of $\tok{x \coloneqq}$ consistent with examples] \label{ex:alg:sem-eq}
Recall that for the language \Imp, the semantics of the production $\tok{x \coloneqq}$ is represented as (a set of) CHC rule(s) of the form:
\[\inferrule*{
  \mathit{Sem}_E(e, \inputVar[1], \outputVar[0]) \land \inputVar[1] = f(\inputVar[0]) \land \outputVar[0] = g(\inputVar[0], \outputVar[1])
}{
  \mathit{Sem}_S(\tok{x \coloneqq} e, \inputVar[0], \outputVar[0])
}\]
for some functions $f$ and $g$ (in the theory of linear integer arithmetic).
The procedure  call $\algSemSynth(\tok{x \coloneqq}, E)$ synthesizes the formulas $f(\inputVar[0]) = t_f$ and $g(\inputVar[0], \outputVar[1]) = t_g$, and returns the constraint $F \defeq \inputVar[1] = t_f \land \outputVar[0] = t_g$ so that $F$ is consistent with $E$. 

We note that for functions expressible in a decidable first-order theory, this problem can be exactly encoded as a Syntax-Guided Synthesis (\sygus) problem~\cite{sygus} and solved by a \sygus solver (e.g., \textsc{cvc5}~\cite{cvc5}).
\end{example}


\subsection{Specification of \algVerify} \label{sec:alg:verify}
The procedure \algVerify takes as input the production $p$, a candidate semantics of $p$,
the interpreter $\interpreter$, and the equivalence oracle $\oracle$; it determines if $\mathit{Sem}$ is equivalent to the interpreter $\mathcal{I}$ for all terms of the form $p(t_1, ..., t_k) \in L(G)$.
If \algVerify determines that the candidate semantics of $p$ is not equivalent to $\interpreter$, \algVerify returns a set of counter-examples $\textit{CEX}$ such that
\rone $\textit{CEX}$ is consistent with $\interpreter$,
\rtwo $\textit{CEX}$ is not consistent with the candidate semantics of production $p$, and
\rthree for the input production $p$, there is exactly one example of the form $\tuple{i, p(t_1, \dots, t_k), o}$ appearing in $\textit{CEX}$ (for any other production $p' \neq p$, there can be many examples of the form $\tuple{i', p'(t_1, \dots, t_k), o'}$ in $\textit{CEX}$).
Otherwise, \algVerify returns an empty-set to signify that the semantics of $p$ is equivalent to $\interpreter$ for all terms of the form $p(t_1, \dots, t_k) \in L(G)$.

\begin{example}[Synthesizing Semantics of $\tok{0}$ for $\grammar_\text{\Imp}$]\label{ex:alg:leaf}
Recall the \Imp language in \Cref{ex:imp-lang}.
On some iterations, \algname will consider
the production $\tok{0}$ (a leaf/nullary production). During the first iteration of the CEGIS loop for $\tok{0}$, the example set $E$ will be empty and
$\algSemSynth$ may return any constraint $F$ of the form $\outputVar[0] = f(\inputVar[0])$. Assume that \algSemSynth returns the constraint $\outputVar[0] = 1$. \algVerify returns the counter-example $\tuple{0, \tok{0}, 0}$, and the example set $E$ is updated.

In the next iteration, the CEGIS loop must return a constraint satisfying the updated example set. For example, suppose that
\algSemSynth returns the constraint $\outputVar[0] = \inputVar[0]$. Again, \algVerify determines that $\outputVar[0] = \inputVar[0]$ is incorrect and returns the new counter-example $\tuple{1, \tok{0}, 0}$. The example set $E$ is updated with the returned counter-example.

A new iteration of the loop is run. On this iteration, \algSemSynth must return a constraint that satisfies both of the previously returned examples. This time \algSemSynth returns the constraint $\outputVar[0] = 0$, \algVerify determines that $\outputVar[0] = 0$ is correct, and \algname proceeds to synthesize the semantics of the next production (e.g., $\tok{1}$).
\end{example}

In \Cref{ex:alg:leaf}, we see how \algname handles nullary (leaf) productions. \algname works nearly identically for most production rules (excluding semantically recursive productions like $\tok{while}$ loops). We demonstrate in \Cref{ex:alg:seq} how \algname synthesizes a semantics for non-nullary productions.

\begin{example}[Synthesizing Semantics of Sequencing for \Imp.]\label{ex:alg:seq}
Continuing from \Cref{ex:alg:leaf}, \algname proceeds and comes to the sequencing operator (i.e., for production $S \to \tok{\seq}(S, S)$). After several attempts at synthesizing the semantics of sequencing, $E$ contains the examples $\tuple{0, \tok{x \coloneqq 1; x \coloneqq 0}, 0}$, $\tuple{0, \tok{x \coloneqq 0; x \coloneqq x + 1}, 1}$, and $\tuple{1, \tok{x \coloneqq 0; (x \coloneqq 1; x \coloneqq x + 1)}, 2}$.

In addition to these examples, we summarize the semantics of each example's sub-term with further examples in the example set $\exampleset$. These summarized examples of sub-terms are generated by data-flow propagation through the term $p(t_1, \dots, t_k)$ using the input $i$.
Because the execution output of a certain sub-term $t_j$ can be used as input for any following term $t_l$ where $l>j$, we repeatedly enumerate all possible inputs for each sub-term (and add them into $E$) until we reach a fix-point, i.e., no new examples for sub-terms are found.
\algname then generates the formula specifying that the desired semantic constraint is consistent with the example set $\exampleset$ using the generated summaries, and produces a new semantic constraint using \algSemSynth. On this iteration, \algSemSynth returns the correct semantic constraint, \algVerify determines that it is correct, and \algname proceeds to synthesize a semantics for the next production.
\end{example}

\subsection{Synthesizing Semantics for Semantically Recursive Productions}
\label{sec:alg:sem-recursive}

So far, we have seen how \algname handles nullary productions and structurally recursive productions (e.g., $\tok{ite}$ and sequencing). However, we have not yet seen how to handle productions that are \emph{semantically} recursive (e.g., $\tok{while}$ loops). To handle semantically recursive productions, we augment the form of the desired constraint to be synthesized:
\algSemSynth must synthesize a predicate $\recpred$ and two base constraints $F_\mathit{nonrec}$ and $F_\mathit{rec}$ such that for every example $\tuple{\inputVal, p(t_1,\dots,t_n), \outputVal}$, the following conditions hold:

\resizebox{\textwidth}{!}{
\begin{minipage}{1.2\textwidth}
\begin{mathpar}
\inferrule*[right=\text{non-rec}]{
  \mathit{Sem}_{A_1}(t_1, \inputVar[A_1], \outputVar[A_1]) \\
  \dots \\
  \mathit{Sem}_{A_n}(t_n, \inputVar[A_n], \outputVar[A_n]) \\
  \neg \recpred(\inputVar[A_0], \outputVar[A_1], \dots, \outputVar[A_n])\\
  F_\mathit{non-rec}(\inputVar[A_0], \outputVar[A_1], \dots, \outputVar[A_n]) \\
  \inputVar[A_0] = \inputVal
}{
  \outputVar[A_0] = \outputVal
}

\inferrule*[right=\text{rec}]{
  \mathit{Sem}_{A_1}(t_1, \inputVar[A_1], \outputVar[A_1]) \\
  \dots \\
  \mathit{Sem}_{A_n}(t_n, \inputVar[A_n], \outputVar[A_n]) \\
  \mathit{Sem}_{A_0}(p(t_1, \dots, t_n), {\inputVar[A_0]}', {\outputVar[A_0]}')\\
  \recpred(\inputVar[A_0], \outputVar[A_1], \dots, \outputVar[A_n])\\
  F_\mathit{rec}(\inputVar[A_0], \outputVar[A_1], \dots, \outputVar[A_n]) \\
  \inputVar[A_0] = \inputVal
}{
  \outputVar[A_0] = \outputVal
}
\end{mathpar}
\end{minipage}
}
where $\recpred$ determines if the non-rec or rec condition should hold. The non-recursive case is similar to the conditions for non-semantically recursive statements (with the addition of asserting that $\recpred$ is false). The recursive case, however additionally allows the semantics to make use of a recursive call to the program term. Other than the change in the shape of the desired semantics, \algname remains unchanged.

\begin{example}[Synthesizing semantics of while loops for \Imp.]\label{ex:alg:while}
Continuing from \Cref{ex:alg:seq}, \algname eventually considers the $\tok{while}$ production. We assume that the grammar $G$ additionally annotates whether each production is semantically recursive.

After several iterations of the CEGIS loop, the example set $\exampleset$ contains the examples $\tuple{0, t, 0}$, $\tuple{1, t, 0}$, and $\tuple{2, t, 0}$, where $t$ is the term
$\tok{while~0 < x~do~x := x - 1}$. In this iteration, \algSemSynth gets called with a recursive summary of $t$ containing the three examples,
and examples for $\tok{x := x - 1}$ and $\tok{0 < x}$.

In this iteration, \algSemSynth finds the correct $\recpred$, $F_\mathit{non-rec}$ and $F_\mathit{rec}$. \algVerify determines that the result is indeed correct and the main loop of \algname continues to the next production. If $\tok{while}$ is the last production of the considered grammar $G$, then \algname terminates and returns the synthesized semantics for each production.
\end{example}

Now that we have defined how \algname handles semantically recursive productions, \algname is fully specified. Theorem~\ref{thm:alg-sound} states that \algname is sound.

\begin{theorem}[\algname is sound] \label{thm:alg-sound}
For any semantics-synthesis problem $\mathcal{P} = \tuple{G, \interpreter, \oracle}$, if $\algname(G, \interpreter, \oracle)$ returns a semantics $\mathit{Sem}$, then $\mathit{Sem}$ is a solution to $\mathcal{P}$.
\end{theorem}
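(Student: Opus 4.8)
The plan is to read off soundness from the control flow of Algorithm~\ref{alg:sem-synth} together with the contract of \algVerify (\Cref{sec:alg:verify}) and the definition of a solution (\Cref{def:problem-definition}). The first step is to note that \algname returns $\mathit{Sem}$ only after the outer \textbf{foreach} loop has iterated over \emph{every} production $p$ of $G$, and that for a fixed $p$ the inner \textbf{do-while} loop can only exit when its guard fails, i.e.\ when the final call to \algVerify on line~\ref{Li:CallToVerify} returns $\mathit{CEX} = \emptyset$; moreover, because the body runs \algSemSynth before \algVerify, the clause stored in $\mathit{Sem}[p]$ at that exit is exactly the candidate that \algVerify approved. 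By the contract of \algVerify, an empty return for $p$ certifies that $\mathit{Sem}$ is equivalent to $\interpreter$ on all terms of the form $p(t_1, \dots, t_k) \in \mcL(G)$, meaning $\interpreter(p(t_1,\dots,t_k), \inputVal) = \outputVal \Leftrightarrow \sem{p(t_1,\dots,t_k)}_{\mathit{Sem}}(\inputVal) = \outputVal$ for all such terms, inputs, and outputs.

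The second step is to combine these per-production certificates into the global equivalence required by \Cref{def:problem-definition:oracle}. Every term $t \in \mcL(G)$ has a unique root production, so $\mcL(G)$ decomposes as the union, over all productions $p$, of the $p$-rooted terms. Hence if $\mathit{Sem}$ is equivalent to $\interpreter$ on the $p$-rooted terms for \emph{every} $p$, then $\mathit{Sem}$ is equivalent to $\interpreter$ on all of $\mcL(G)$, which is exactly the condition for $\mathit{Sem}$ to be a solution of $\mathcal{P}$ (\Cref{def:problem-definition}). Taken at face value, this union argument seems to finish the proof.

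The delicate point---and the step I expect to be the main obstacle---is that the certificates for distinct productions are issued at \emph{different moments} of the run, each against the semantics as it stood at that moment, whereas the theorem concerns the single \emph{final} semantics returned. A clause $\mathit{Sem}[q]$ for a production processed after $p$ is absent when $p$ is verified, so I must argue that the equivalence certified for the $p$-rooted terms is preserved in the final $\mathit{Sem}$. I would discharge this by the observation that $\sem{t}_{\mathit{Sem}}$ depends only on the clauses $\mathit{Sem}[q]$ for productions $q$ that actually occur in $t$ (and, for a semantically recursive production, on $p$ itself, via the two-rule rec/non-rec encoding of \Cref{sec:alg:sem-recursive}), together with the fact that \algname writes each clause once and never revisits it. Processing productions in dependency (bottom-up) order then closes the gap by structural induction on terms: when $p$ is verified, the clauses of all its sub-productions are already final and, by the induction hypothesis, already correct, so the snapshot semantics agrees with the final one on every $p$-rooted term; the only self-dependence, coming from $\tok{while}$-style productions, is confined to the recursive rule and is settled by the \algVerify certificate itself, which compares the full least-fixed-point semantics against $\interpreter$ (including the non-termination behavior captured by the synthesized predicate $\recpred$). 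I would therefore make explicit exactly which ordering assumption on the \textbf{foreach} loop the argument needs, since a literal reading of the pseudocode, which leaves the iteration order unspecified, is where the one real subtlety lies.
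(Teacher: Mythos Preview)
Your overall shape matches the paper's proof: both trace the control flow, observe that the inner loop exits only when \algVerify returns $\emptyset$, and rely on the fact that $\mathit{Sem}[p]$ is never overwritten after $p$'s iteration. Where you part ways is in your third paragraph, and the issue stems from reading the \algVerify contract too strongly. The call on line~\ref{Li:CallToVerify} passes only $\mathit{Sem}[p]$, and the implementation (\Cref{alg:verify}) evaluates every sub-term $t_i$ with the \emph{interpreter} $\interpreter$, not with any other clause of $\mathit{Sem}$. So what an empty $\mathit{CEX}$ actually certifies is a \emph{local} property of the single CHC for $p$: whenever the children's inputs and outputs are supplied by $\interpreter$, the clause reproduces $\interpreter$'s output at the root. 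That certificate is completely independent of $\mathit{Sem}[q]$ for $q\neq p$, so there is no snapshot-versus-final issue to resolve and no ordering assumption on the \textbf{foreach} loop is required---which is exactly why the paper's proof never imposes one; it simply maintains the invariant ``all previously processed clauses are (locally) correct'' and notes that later iterations leave earlier clauses unmodified.

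Your proposed fix, processing productions in a dependency (bottom-up) order, is both unnecessary and unavailable: for the grammars at hand there \emph{is} no such order, since productions like $S\to S\,\tok{;}\,S$ or $E\to E\mathbin{\tok{+}}E$ have their own head nonterminal on the right, and this is not confined to the semantically recursive $\tok{while}$-style rules you single out. The step that is genuinely needed---and that both you and the paper leave implicit---is a structural induction on \emph{terms}, not on the processing order of productions: once every $\mathit{Sem}[p]$ is locally correct against $\interpreter$, induction on the syntax of $t$ yields $\interpreter(t,\inputVal)=\outputVal \Leftrightarrow \sem{t}_{\mathit{Sem}}(\inputVal)=\outputVal$ directly. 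If you drop the ordering hypothesis and replace your third paragraph with that induction, your argument goes through for any iteration order.
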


\begin{proof}
    \algname iterates over the productions in some order, say $p_0, \dots, p_{k-1}$. For all iterations $0 \leq i \leq k$, \algname maintains the invariant that the synthesized semantics $\mathit{Sem}$ is correct with respect to the oracle $\oracle$ for all previously considered productions $p_0$ through $p_{i-1}$. This condition trivially holds on the first iteration. To proceed to iteration $i+1$, the CEGIS loop for production $p_i$ must terminate. For the CEGIS loop to terminate, \algVerify must return an empty set of counter-examples, which implies that $\mathit{Sem}$ is correct for the production $p_i$ (and that the semantics for productions $p_1, \dots, p_{i-1}$ were left unmodified)---and thus the invariant is maintained. The algorithm only terminates after exploring all productions. Consequently, upon termination, $\mathit{Sem}$ must be correct for all productions of $G$---i.e., $\mathit{Sem}$ satisfies the given semantics-synthesis problem $\mathcal{P}$.
\end{proof}

While \Cref{thm:alg-sound} states the soundness of \algname, it fails to show that \algname will eventually synthesize a correct semantics. \Cref{thm:alg-progress} states that \algname makes progress. Intuitively,
it states that once a semantic rule for production $p$ is explored during some iteration of the CEGIS loop, it is never explored in any future iteration of the CEGIS loop for production $p$.

\begin{theorem}[\algname makes progress] \label{thm:alg-progress}
For any semantics-synthesis problem $\mathcal{P} = \tuple{G, \interpreter, \oracle}$, if $\algname(G, \interpreter, \oracle)$ is synthesizing the semantics of production $p$ and on the $k^{\textit{th}}$ iteration of the CEGIS loop for production $p$, \algSemSynth produces the semantic relation $R_k$, then for all future iterations $j > k$, \algSemSynth will return some relation $R_j \neq R_k$.
\end{theorem}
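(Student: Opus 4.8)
The plan is to show that the counter-examples that force a $(k{+}1)$-st iteration permanently exclude $R_k$: every later candidate must be consistent with an example set on which $R_k$ is demonstrably \emph{in}consistent. First I would unpack the hypothesis that iteration $j > k$ occurs. By the \textbf{do-while} guard of \Cref{alg:sem-synth}, the loop continues past iteration $k$ only if $\algVerify$ returned a non-empty set on that iteration; call it $\mathit{CEX}_k$. By property \rtwo of the specification of $\algVerify$ (\Cref{sec:alg:verify}), $\mathit{CEX}_k$ is \emph{not} consistent with the candidate semantics produced on iteration $k$, i.e.\ the semantic rule $R_k$ is inconsistent with $\mathit{CEX}_k$ in the sense of \Cref{def:alg:example-consistency}. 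The algorithm then executes $\exampleset \gets \exampleset \cup \mathit{CEX}_k$, and since the example set is only ever enlarged, writing $E_i$ for the example set entering iteration $i$ we have $\mathit{CEX}_k \subseteq E_j$ for all $j > k$. Property \rthree of $\algVerify$ additionally guarantees $\mathit{CEX}_k$ is self-contained --- it carries the single failing top-level $p$-example together with the sub-term examples needed to summarize that example's children --- so the failing instance survives verbatim into every $E_j$.

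The crux is an \emph{antitonicity} lemma for consistency: if a semantic constraint $F$ is consistent with an example set $E'$ and $E \subseteq E'$, then $F$ is consistent with $E$. The subtlety --- and what I expect to be the main obstacle --- is that the consistency condition of \Cref{eqn:semantic-constraint-condition} is \emph{not} a conjunction of independent per-example checks: the term $\mathit{Summary}(t_i)$ is a disjunction ranging over all examples in the \emph{entire} set that match the sub-term $t_i$. Enlarging the example set therefore adds disjuncts and \emph{weakens} each $\mathit{Summary}(t_i)$, which weakens the antecedent of the implication to be discharged. I would make this precise by fixing a top-level $p$-example $e \in E \subseteq E'$ and observing $\mathit{Summary}_{E}(t_i) \models \mathit{Summary}_{E'}(t_i)$; hence the antecedent of the $E$-instance of \Cref{eqn:semantic-constraint-condition} for $e$ entails the antecedent of its $E'$-instance, so validity of the latter (supplied by consistency with $E'$) forces validity of the former. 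Ranging over all $p$-examples of $E$ yields consistency with $E$. Intuitively, adding examples can only make consistency harder, never restore a consistency once lost.

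With the lemma in hand the theorem follows quickly. Suppose toward a contradiction that $R_j = R_k$ for some $j > k$. By the specification of $\algSemSynth$ (\Cref{sec:alg:base-sem-spec}), $R_j$ is consistent with $E_j$, so $R_k$ is consistent with $E_j$. Since $\mathit{CEX}_k \subseteq E_j$, the antitonicity lemma gives that $R_k$ is consistent with $\mathit{CEX}_k$ --- contradicting property \rtwo of $\algVerify$, which established that $R_k$ is inconsistent with $\mathit{CEX}_k$. Hence $R_j \neq R_k$, as claimed. Finally, I would note that this argument is uniform across ordinary and semantically recursive productions: it appeals only to the abstract specifications of $\algVerify$ and $\algSemSynth$ and to \Cref{def:alg:example-consistency}, never to the particular shape of the synthesized constraint (whether $F$, or the triple $F_\mathit{rec}$, $F_\mathit{non-rec}$, $\recpred$ of \Cref{sec:alg:sem-recursive}).
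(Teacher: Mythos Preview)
Your proof is correct and follows the same contradiction-based skeleton as the paper's: the existence of iteration $j>k$ forces a non-empty $\mathit{CEX}_k$ with which $R_k$ is inconsistent, $\mathit{CEX}_k$ persists in every later example set, and any $R_j$ returned by \algSemSynth must be consistent with that larger set, yielding the contradiction. The paper's own proof is terser and does not isolate or justify the antitonicity lemma you spell out---it simply asserts that consistency with the enlarged $E$ rules out $R_k$---so your treatment of the $\mathit{Summary}$-dependence subtlety is in fact more careful than the original.
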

\begin{proof}
Assume that the negation holds, i.e., ``$\exists j>k. R_j = R_k$''. By the assumption $j > k$, it must be that \algVerify$(R_k, p, \interpreter, \oracle)$ returned a non-empty set of examples $\textit{CEX}$. Otherwise, the CEGIS loop for production $p$ would have immediately terminated and not continued to iteration $j$. By definition, $R_k$ is inconsistent with the set of counter-examples $\textit{CEX}$. The returned counter-examples $\textit{CEX}$ are then added to the example set $\exampleset$ for all future iterations. By assumption, $R_j$ must be consistent with the example set $\exampleset$, and thus $R_j$ must not be $R_k$, a contradiction.
\end{proof}

\section{Implementation}
\label{sec:implementation}

This section gives details of \toolname, which implements our approach to synthesizing semantics via the algorithm \algname.
\toolname is developed in Scala (version 2.13), and uses \textsc{cvc5} (version 1.0.3) to solve \sygus problems---which are used within our implementation of \algSemSynth to generate candidate semantic constraints. The remainder of this section is structured as follows: \Cref{sec:implementation:sem-synth} details how we implement \algSemSynth.
\Cref{sec:implementation:verify} summarizes the implementation of \algVerify, and explains
how we approximate an equivalence oracle for an interpreter.
\Cref{sec:implementation:optimization} presents
an optimization of \algSemSynth for productions with multiple outputs (i.e., where the output type of a production is a tuple).

\subsection{Implementation of \algSemSynth} \label{sec:implementation:sem-synth}

In \Cref{sec:algorithm}, \algname is parameterized on the procedure \algSemSynth.
On line \ref{Li:CallToSynthSemanticEq} of Algorithm \ref{alg:sem-synth}, we assume that \algSemSynth produces a semantic constraint $F$ for production $p_i$ that satisfies the example set $E$. To accomplish this task, we construct a \sygus problem consisting of a grammar of allowable semantic constraints and a set of conditions to enforce that the semantic constraint is consistent with the example set. 
To handle productions whose semantics does not evaluate its child terms from left to right, we run in parallel a version of \algSemSynth for each permutation of the child terms and immediately return upon any permutation's success. 
{In practice, for all of our benchmarks, all the productions evaluate their children from left to right.}

We defer discussion of the \sygus grammars we use to \Cref{sec:eval:benchmarks} when we discuss each benchmark. The specification of the semantic constraint is exactly the condition specified in \Cref{eqn:semantic-constraint-condition}.



\subsection{Implementation of \algVerify} \label{sec:implementation:verify}
In \Cref{sec:algorithm}, Algorithm \ref{alg:sem-synth} is parameterized on the procedure \algVerify (line \ref{Li:CallToVerify}), which uses the equivalence oracle $\oracle$ to determine if the learned semantics $\mathit{Sem}$ is consistent with the interpreter for all terms of the form $p(t_1, \dots, t_k) \in L(G)$ for some production $p$.
In \toolname, we approximate an equivalence oracle using fuzzing. Specifically, we randomly generate terms and inputs and use the interpreter $\interpreter$ to generate an output. We then use the learned constraint for $p_i$ to generate inputs to each sub-term (from left to right), and compute outputs for each using interpreter $\interpreter$. In effect, we are computing a new example set $E'$, and testing the semantic constraints learned so far.
If any example disagrees with the learned semantics of production $p$, we return the example (and necessary child-term summaries) as a counter-example.

When \algVerify fuzzes the semantics, it uses the interpreter to generate examples (i.e.,{ terms with corresponding input-output examples)}. During example generation, we set a recursion limit of 1,000 recursive calls. 
We discard an example---i.e., we assume the program does not terminate---if its run exceeds the recursion depth.
We then evaluate the candidate semantic constraint from left-to-right to ensure that the semantic constraint is consistent with each of the generated examples. We return the first example (and the child-term summaries for the example) that is inconsistent with the candidate constraint.

\begin{algorithm}[t]
\Proc{\algVerify$(R_p, p, \interpreter, \cdot)$}{
    $E \gets \text{random set of examples of the form } \tuple{\mathit{in}, p(t_1, \dots, t_k), \mathit{out}} \text{ consistent with } \interpreter$\; 

    $(\bigwedge_i \inputVar[i] = f_i) \wedge \outputVar[i] = f_0 \gets R_p$
    \tcp*{ Destruct semantic constraint to recover each $f_i$}

    \For{$\tuple{\mathit{in}, p(t_1, \dots, t_k), \mathit{out}} \in E$}{
        $E' \gets \tuple{\mathit{in}, p(t_1, \dots, t_k), \mathit{out}}$\;
        $M \gets \{\inputVar[0] \mapsto \mathit{in} \}$
        \tcp*{Build up model to evaluate sub-terms's semantics}
        \For{$i \gets 1$ to $k$}{
            $\mathit{in}_i \gets \sem{f_i}_M$
            \tcp*{Get input to term $t_i$.}
            $\mathit{out}_i \gets \interpreter(t_i, \mathit{in}_i)$
            \tcp*{Evaluate term $t_i$}
            $M \gets M[\outputVar[i] \mapsto \mathit{out}_i]$
            \tcp*{Update model. Ensures next term's input is defined.}
            $E' \gets E' \cup \{\tuple{\mathit{in}_i, t_i, \mathit{out}_i}\}$
            \tcp*{Add sub-term's summary to set of examples}
        }
        \uIf{$\sem{f_0}_M \neq \mathit{out}$}{
          \Return{$E'$}
          \tcp*{The output computed by evaluating $R_p$ is inconsistent with $E'$}
        }
    }

    \Return{$\emptyset$}
}
\caption{Verifier implementation using (approximate) fuzzing based oracle.}
\label{alg:verify}
\end{algorithm}

\subsection{Optimized \algSemSynth for Multi-Output Productions} \label{sec:implementation:optimization}
In \Cref{sec:implementation:sem-synth}, we described how \algSemSynth produces and solves (using \textsc{cvc5}) a \sygus problem to synthesize a semantic constraint that is consistent with the current example set. However, it is well known that \sygus solvers scale poorly as a function of the size of the desired grammar/result. This issue is especially problematic when learning a semantic constraint for a language in which productions have multiple outputs (e.g., statements for \Imp with more than one variable) and thus the grammar and resulting constraint grow with the number of outputs.

For some languages, it is possible to augment the semantics of the language to use a suitable theory to encode multiple outputs as a single output---e.g., using the theory of arrays to support multi-variable states in the \ImpArr~language (cf.\  \Cref{sec:eval:benchmarks}). However, for other languages this methodology may require the use of theories that are not well suited for existing \sygus and SMT solvers (e.g., $\textsc{RegEx}(k)$ in \Cref{sec:eval:benchmarks} would require the theory of strings).
Instead, for such instances we developed a variant of
\algSemSynth that synthesizes a constraint for each output independently. However, this process may lead to constraints that do not agree on the internal data flow of the constraints (i.e., the functions determining the input to each child term). To remedy this issue, our implementation of \algSemSynth uses an additional CEGIS loop that resynthesizes the constraint for each output until all agree on the inputs to each child term.

We detail \algSemSynth for $N$ outputs in Algorithm~\ref{alg:sygus-multi}.
For simplicity, we explain how Algorithm~\ref{alg:sygus-multi} works for a production that has two outputs (i.e., $N = 2$). Consider the case for $A_0 \to p(A_1, \dots, A_n)$ where $\outputType_{A_0} \defeq \outputType_1 \times \outputType_2$. In this scenario, our goal is to synthesize two constraints $F$ and $G$ (i.e.,  $F = F_1$ and $G = F_2$),
\begin{align}\label{eq:multi-output-funcs}
F \defeq& x_1 = f_1(x_0) \land \dots \land {x_n} = f_n(x_0, x_1', \dots, x_{n-1}') \land {x_0}_0' = f_0(x_0, x_1', \dots, x_n')\\
G \defeq& x_1 = g_1(x_0) \land \dots \land {x_n} = g_n(x_0, x_1', \dots, x_{n-1}') \land {x_0}_1' = g_0(x_0, x_1', \dots, x_n')
\end{align}

\begin{algorithm}[t]
\Proc{$\algSemSynth(p, E)$}{
    $A_0 \to \sigma(A_1, \dots, A_n) \gets p$\;
    $\tau_1 \times \dots \times \tau_N \gets \tau_{A_0}$ \tcp*{Determine number of outputs for production $p$.}
    $D \gets \top$ \tcp*{Data flow constraints.}
    \While{$\top$}{
        $\Gamma \gets D$\;
        \For{$i \gets 1 \text{ to } N$}{
            \tcp{Construct per-output conditions (c.f., \Cref{eq:multi-output-funcs})}
            $F_i \gets x_1 = f_1^{F_i}(x_0) \land \cdots \land x_n = f_n^{F_i}(x_0, x_1', \dots, x_{n-1}') \land {x_0}_i' = f_0^{F_i}(x_0, x_1', \dots, x_n')$\;
            $\Gamma \gets \Gamma \land F_i$\;
        }

        \tcp{Generate \sygus conditions (c.f., lines 1-2 of \Cref{eq:multi-output-check})}
        $\phi \gets  \left(\bigwedge_{i, j} \mathit{Summary}(t_i)(x_{i}^{F_j}, {x_{i}^{F_j}}') \right) \land \langle {{x_0}_{0}^{F_0}}', \dots, {{x_0}_{n}^{F_n}}'  \rangle = \mathit{out}$\;
        $\Gamma \gets \phi \land \left(\bigwedge_i x_0^{F_i} = \mathit{in}\right)$\;
        $m \gets \algSolveSygus(\Gamma)$\;
        $M \gets \algSolveSAT(\phi)$\;

        \tcp{Check if inconsistency is found (line 3 of \Cref{eq:multi-output-check})}
        \eIf{$M.\mathit{sat} \land \exists i,j,k: M(x_i^{F_j}) \neq M(x_i^{F_k})$}{
            \tcp{Inconsistency is caused by inaccurate summary of a child term}
            \lIf{$\exists t_i:\forall \langle \mathit{in}, t_i, \mathit{out}\rangle \in E: \forall j: \mathit{in} \neq M(x_i^{F_j})$}{
                $E \gets E \cup \{\langle \mathit{in}, t, \interpreter(t_i, \mathit{in})\rangle\}$
            }
            \tcp{Real data flow inconsistency}
            \lElse{$D \gets D \land \left(\bigvee_{i,j} x_i^{F_j} \neq M(x_i^{F_j})\right)$}
        }{
        \tcp{No inconsistency}
            merge $f_i^{F_j} \in m$ to form the solution\;
        \Return{solution}\;
        }
    }
}
\caption{\algSemSynth for multi-output productions.}
\label{alg:sygus-multi}
\end{algorithm}

To determine if $F$ and $G$ agree on each child term's input for example set $E'$, we generate the formula $\phi$ shown below, for each example $\tuple{\inputVal, p(t_1, \dots, t_n), \outputVal} \in E'$:
\begin{equation} \label{eq:multi-output-check}
\begin{aligned}
x_0^F = x_0^G = \inputVal
&\land \mathit{Summary}(t_1)(x_1^F, x_1^{'F}) \land \dots \land \mathit{Summary}(t_n)(x_n^F, x_n^{'F}) \\
&\land \mathit{Summary}(t_1)(x_1^G, x_1^{'G}) \land \dots \land \mathit{Summary}(t_n)(x_n^G, x_n^{'G}) \\
&\land F \land G \land (x_1^F \neq x_1^G \lor \dots \lor x_n^F \neq x_n^G)
\land \tuple{{x_0}_0^{'F}, {x_0}_1^{'G}} = \outputVal
\end{aligned}
\end{equation}
which asks if $F$ and $G$ agree on the input to each child term for the given example. 
To make this concept concrete, consider the following example.

\begin{example}[Synthesizing Semantic Constraint for Multi-Output Production.] \label{ex:multi-output}
Consider the task of synthesizing a semantics for $\tok{x_0 \coloneqq}$ in the language \Imp[2], using the examples:
$\tuple{\tuple{0,1}, \tok{x_0 \coloneqq x_1}, \tuple{1, 1}}$, $\tuple{\tuple{0,1}, \tok{x_1}, 1}$, $\tuple{\tuple{1,1}, \tok{x_1}, 1}$.

For the above examples, \algSemSynth might generate
$F \defeq \inputVar[{1,0}] = \inputVar[{0,0}] \land \inputVar[{1,1}] = \inputVar[{0,1}] \land \outputVar[{0,0}] = \outputVar[{1,1}]$ and
$G \defeq \inputVar[{1,0}] = \inputVar[{0,1}] \land \inputVar[{1,1}] = \inputVar[{0,1}] \land \outputVar[{0,0}] = \outputVar[{1,1}]$, where 
$\inputVar[{i,j}]$
is the $j^{\textit{th}}$ projection of $\inputVar[i]$.
While both $F$ and $G$ are consistent with the examples, the data-flow of $F$ is not consistent with the data-flow of $G$ (i.e., in $F$, $\inputVar[{1,0}]$ is assigned $\inputVar[{0,0}]$, while in $G$,
$\inputVar[{1,0}]$ is assigned $\inputVar[{0,1}]$).
We can construct the formula in \Cref{eq:multi-output-check} for $F$ and $G$, and find out that in $F$, the variable ${\inputVar[{1,0}]}^F$
takes the value $0$, and in $G$, the variable ${\inputVar[{1,0}]}^G$
takes value $1$. 
Thus, $F$ and $G$ are not consistent on data-flows to children for the provided example. We generate a new condition for the next iteration of \algSemSynth that asserts ${\inputVar[{0,0}]}^{F} \neq 0 \lor {\inputVar[{0,0}]}^{G} \neq 1$.
\end{example}

In practice, we create a copy of each variable indexed by $F$ and $G$, respectively, to avoid clashing variable names when encoding the constraints $F$ and $G$ within a single formula. To check the consistency of $F$ and $G$'s data flows, we use \textsc{cvc5} to check the satisfiability of the formula $\phi$ in \Cref{eq:multi-output-check}. If $\phi$ is unsatisfiable, then $F$ and $G$ must agree on the inputs of all child terms for the given examples. If so, then we may return either $F \land {x_0}_2 = g_0(\dots)$ or $G \land {x_0}_1 = f_0(\dots)$ (i.e., because $F$ and $G$ agree on all child term inputs, we may use either to constrain the data-flow to child terms).

If $\phi$ is satisfiable, then $F$ and $G$ do not agree on the input to all child terms. In this case, we find a model that satisfies $\phi$. If there is some subterm $t_i$ such that there is no example $\tuple{\inputVal, t_i, \outputVal} \in E$ such that $\inputVal = M(x_i^F)$ or $\inputVal = M(x_i^G)$, then we add the example $\tuple{\inputVal, t, \interpreter(t_i, \inputVal)}$ to the set of examples, and resynthesize the constraints $F$ and $G$. Otherwise, we know that the sub-term summaries are sufficient to fully specify both $F$ and $G$ for all examples in $E$. Thus, we must add a new constraint that ensures the pair of constraints $F$ and $G$ are never synthesized again. To do this, we add a new constraint $x_0^F \neq M(x_0^F) \lor x_0^G \neq M(x_0^G) \lor \dots \lor x_n^F \neq M(x_n^F) \lor x_n^G \neq M(x_n^G)$, which ensures that the input of at least one of the child terms for either $F$ or $G$ must change. A new candidate $F$ and $G$ are then synthesized. The CEGIS loop continues until it finds a valid pair of $F$ and $G$ for the set of examples.

\section{Evaluation} \label{sec:evaluation}


The goal of our evaluation is to answer the following questions:
\begin{description}
    \item[RQ1] Can \toolname synthesize the semantics of non-trivial languages?
    \item[RQ2] Where is time spent during synthesis?
    \item[RQ3] Is the multi-output optimization from \Cref{{sec:implementation:optimization}} effective?
    \item[RQ4] How do synthesized semantics compare to manually written ones?
\end{description}

All experiments were run on a machine with an Intel(R) i9-13900K CPU and 32 GB of memory, running NixOS 23.10 and Scala 2.13.13.
All experiments were allotted 2 hours, 4 cores of CPU, and 24 GB of memory.
\textsc{Cvc5} version 1.0.3 is used for SMT solving and \sygus function synthesis.
For the total running time of each experiment, we report the median of 7 runs using different random seeds.
For every language, we record whether \toolname terminates within the given time limit of 2 hours, and when it does, we also record the set of synthesized semantic rules.
A language that does not terminate within the time limit on more than half of the seeds is reported as a timeout.


\subsection{Benchmarks}\label{sec:eval:benchmarks}

We collected \jiangyichanged{15} benchmarks from the two sources discussed below.
For every language discussed in this section, we manually translated the semantics to a simple equivalent interpreter written in Scala;
our goal was then to synthesize an appropriate CHC-based semantics from the interpreter.
The one non-standard feature of our setup is that the interpreter must
be capable of interpreting the programs derived from \emph{any} nonterminal in the grammar.





\paragraph{\semgus benchmarks}
Our first source of benchmarks is the \semgus benchmark repository~\cite{kim2021semantics}.
This dataset contains \semgus synthesis problems where each problem consists of a grammar of terms, a set of CHCs inductively defining the semantics of terms in the grammar, and a specification that the synthesized program should meet.
For our purposes, we ignored the specification and collected the grammar plus semantics for \jiangyichanged{11} distinct languages that appear in the repository.
We do not consider languages that contain abstract data types (e.g., stacks) or require a large range of inputs (e.g., ASCII characters) due to their poor support by the \sygus solver.
These languages gave us \jiangyichanged{11} benchmarks.

Some of the languages used in the \semgus benchmark set are parametric (denoted by a parameter $k$), meaning that the semantics is slightly different based on a given parameter (e.g., number of program variables
for IMP and length of the input string for regular expressions).
For these benchmarks, we ran \toolname on an increasing sequence of parameter values and
reported the largest parameter value for which \toolname succeeds.

\textsc{RegEx}$(k)$ is a language for matching regular expressions on strings of length $k$;
Given a regular expression $r$ and string $s$ of length $k$ (index starts from 0), the semantic functions produce
a Boolean matrix $M \in \mathtt{Bool}^{(k+1) \times (k+1)}$ such that $M_{i,j} = \textit{true}$ iff the substring $s_{i \dots j-1}$ matches regular expression $r$---here $s_{i \dots i}$ denotes the empty string, and by definition, $M_{i,j} = \textit{false}$ for $i \ge j$.


$\textsc{Cnf}(k)$, $\textsc{Dnf}(k)$ and $\textsc{Cube}(k)$
are languages of Boolean formulas (of the syntactic kind indicated by their names, i.e., conjunctive normal form, disjunctive normal form, and cubes) involving up to $k$ variables. 

$\textsc{Imp}$ is an imperative language that contains common control flow structures, such as conditionals and while loops, for programs with $k$ integer variables.
Note that \textsc{Imp} includes operators such as $\tok{while}$ and $\tok{do\_while}$ for which
the semantics involves semantically recursive productions (\Cref{sec:alg:sem-recursive}).
%
The complete semantics of $\textsc{Imp}$ can be found in the supplementary material.
Two versions of \textsc{Imp} are used in our benchmarks.
The first version is called $\textsc{Imp}(k)$, where we explicitly record the states of $k$ variables as $k$ arguments of semantic functions.
\toolname could synthesize its semantics up to $k=2$.
We also present another version of this language called \textsc{ImpArr} where an arbitrary number of variables can be used.
%
In $\ImpArr$, variables are named $\mathit{var}_0, \mathit{var}_1, \dots$ where the subscript is any natural number.
We use the theory of arrays to store the variable states into an array, passing the array as an argument to the semantic function. The array is indexed by variable id.
When we present results later in the section, the results
for both languages (i.e., $\textsc{Imp}(2)$ and $\ImpArr$) are shown for comparison.
(For $\textsc{Imp}(2)$, the goal is to synthesize a semantics that works on states with exactly 2 variables;
for $\ImpArr$, the goal is to synthesize a semantics that works for states with any number of variables.)

$\textsc{IntArith}$ is a benchmark about basic integer calculations, like addition, multiplication, and conditional selection. It also includes three constants whose value can be specified in the input to the semantic relations.

$\textsc{BvSimple}(k)$ describes bit-vector operations involving $k$ bit-vector constants.
$\textsc{BvSimpleImp}(m, n)$ is essentially a variant of $\textsc{BvSimple}(k)$ that augments the language with let-expressions.
Parameters $m$ and $n$ mean that the language can use up to
$m$ bit-vector constants and $n$ bit-vector variables.
$\textsc{BvSaturate}(k)$ and $\textsc{BvSaturateImp}(k)$ use the same syntaxes as $\textsc{BvSimple}(k)$ and $\textsc{BvSimpleImp}(k)$, respectively, but operations use a saturating semantics that never overflows or underflows.

\paragraph{Attribute-grammar synthesis~\cite{panini-paper}}
Our second source of benchmarks is from the Panini tool for synthesizing attribute grammars~\cite{panini-paper}.
An attribute grammar (AG) associates each nonterminal of an underlying context-free grammar with some number of \emph{attributes}.
Each production has a set of attribute-definition rules (sometimes called \emph{semantic actions}) that specify how the value of one attribute of the production is set as a function of the values of other attributes of the production.
In a given derivation tree of the AG, each node has an associated set of \emph{attribute instances}.
The attribute-definition rules are used to obtain a consistent assignment of values to the tree's attribute instances:
each attribute instance has a value equal to its defining function applied to the appropriate (neighboring) attribute instances of the tree.
Effectively, AGs assign a semantics to programs via attributes, and the underlying attribute-definition rules can be captured via CHCs.
While there are AG extensions to handle circular AGs \cite{DBLP:journals/toplas/Jones90,DBLP:journals/entcs/MagnussonH03}---i.e., AGs in which some derivation trees have attribute instances that are defined in terms of themselves---the work of \citeauthor{panini-paper} concerns non-circular AGs.

\citet{panini-paper} present 12 benchmarks.
We ignored 4 benchmarks that are either
\rone not publicly accessible, or
\rtwo use semantic functions that cannot be expressed in SMT-LIB and are thus beyond what can be synthesized using a \sygus solver---e.g., complex data structures, or
\rthree identical to existing benchmarks from other sources.
We did not run their tool on our benchmarks because our problem is more general than theirs,
supporting a wider range of language semantics:
the scope of our work includes recursive semantics, which can be handled only indirectly in a system such as theirs (which supports only non-circular AGs)---i.e., by introducing powerful hard-to-synthesize recursive functions that effectively capture an entire construct's semantics.
The running time is also not directly comparable, because \citeauthor{panini-paper}'s approach uses user-provided sketches (i.e., partial solutions to each semantic action), which simplifies the synthesis problem. 
In contrast, in our work we do not assume that a sketch is provided for the semantic constraints and instead consider general \sygus grammars.

The remaining 8 benchmarks of \citeauthor{panini-paper} are consolidated as 4 languages (i.e., giving us four benchmarks).
\textsc{IteExpr} is a language of basic integer operations, comparison
expressions, and ternary if-then-else expressions (not statements).
Our \textsc{IteExpr} benchmark subsumes benchmarks
B3, B4, and B5 of \citeauthor{panini-paper} because their only differences stem from whether the expression is written in prefix, postfix, or infix notation.
For \toolname, such surface-syntax differences are unimportant because \toolname uses regular tree grammars to express a language's abstract syntax, and the underlying abstract syntax of prefix, postfix, and infix expressions is the same.
\textsc{BinOp} is a language of binary strings (combined from benchmarks B1 and B2 of \citeauthor{panini-paper}), along with built-in functions for popcount (counting the number of ones) and binary-to-decimal conversion.
\textsc{Currency} is a language for currency exchange and calculation. 
\textsc{Diff} is a language for computing finite differences.
Because the original benchmark from \citeauthor{panini-paper} involves differentiation and real numbers (which are not supported by existing \sygus solvers), we modified the benchmark to perform the related operation of finite differencing over integer-valued functions.
Specifically, for a function $f$, its finite difference is defined as \(\Delta f = f(x+1) - f(x)\).
Starting from here, finite differences for sums and products can be obtained compositionally, e.g., $\Delta\left(u\cdot v\right) = u(x) \Delta v(x) + v(x+1) \Delta u(x)$.

\paragraph{\sygus grammars}
For each semantic function, we also provided a grammar for the \sygus solver, which contains the operators of the underlying logical theory and any specific functions that must appear in the target semantics. 

For instance, for all benchmarks using the logic fragment \texttt{NIA}, we allow the use of basic integer operations and integer constants, along with language-specific operations like conditional operators (if-then-else).

For the languages \textsc{Diff} and \textsc{Currency} we did not include conditional operators, because they do not appear in the semantics.

For \textsc{BVSaturated} and \textsc{BVIMPSaturated} we provided operators for detecting overflow and underflow.

Lastly, for languages known to be free of side effects, we modified the \sygus grammars to forbid data flow between siblings, and only allow parent-to-child and child-to-parent assignments.




\subsection{RQ1: Can \toolname Synthesize the Semantics of Non-trivial Languages?}

%

%
%

\afterpage{
\begingroup
\renewcommand\arraystretch{0.6}
\footnotesize
\begin{longtable}{
    c
    l
    r
    r
    r
    r
    r
}
\caption{Detailed results for selected benchmarks. See supplementary material for the full list of results.\label{tab:eval-short}\protect}\\
\toprule
{Lang.} & {Rule}%
    & {\# Iter.} & {\# Ex} &  {SyGuS (s)} & {SMT (s)} & {Total (s)}           \\
\midrule
\multirow{18}{*}{\rotatebox[origin=c]{90}{$\ImpArr$}}&$E\to$\,$\tok{0}$&1&1&\SI{0.01}{}&\SI{0.01}{}&\SI{0.04}{}\\
&$E\to$\,$\tok{1}$&1&1&\SI{0.01}{}&\SI{0.01}{}&\SI{0.04}{}\\
&$B\to$\,$\tok{f}$&1&1&\SI{0.01}{}&\SI{0.01}{}&\SI{0.05}{}\\
&$B\to$\,$\tok{t}$&1&1&\SI{0.01}{}&\SI{0.01}{}&\SI{0.1}{}\\
&$S\to$\,$\tok{\mathtt{dec\_var}}_i$&3&2&\SI{4.29}{}&\SI{0.56}{}&\SI{5.36}{}\\
&$S\to$\,$\tok{\mathtt{inc\_var}}_i$&3&2&\SI{3.56}{}&\SI{0.54}{}&\SI{4.51}{}\\
&$B\to$\,$\lnot B$&3&2&\SI{0.01}{}&\SI{1.42}{}&\SI{5.53}{}\\
&$E\to$\,$\tok{\mathtt{var}}_i$&3&2&\SI{0.01}{}&\SI{0.28}{}&\SI{0.66}{}\\
&$E\to$\,$E\mathbin{\tok{+}}E$&3&2&\SI{0.02}{}&\SI{6.51}{}&\SI{12.38}{}\\
&$E\to$\,$E\mathbin{\tok{-}}E$&3&2&\SI{0.01}{}&\SI{6.43}{}&\SI{12.13}{}\\
&$B\to$\,$E\mathbin{\tok{<}}E$&4&3&\SI{0.01}{}&\SI{3.38}{}&\SI{10.33}{}\\
&$B\to$\,$B \land B$&4&3&\SI{0.06}{}&\SI{2.36}{}&\SI{6.23}{}\\
&$S\to$\,$\tok{\mathtt{var}}_i \coloneqq E$&2&1&\SI{0.03}{}&\SI{8.11}{}&\SI{11.6}{}\\
&$B\to$\,$B \lor B$&5&4&\SI{0.03}{}&\SI{2.42}{}&\SI{6.14}{}\\
&$S\to$\,$S~\tok{;}~S$&3&1&\SI{0.02}{}&\SI{13.88}{}&\SI{25.91}{}\\
&$S\to$\,$\tok{do\_while}~S~B$&5&2&\SI{0.22}{}&\SI{342.25}{}&\SI{499.11}{}\\
&$S\to$\,$\tok{while}~B~S$&4&2&\SI{0.1}{}&\SI{218.66}{}&\SI{321.11}{}\\
&$S\to$\,$\tok{ite}~B~S~S$&4&2&\SI{0.03}{}&\SI{7.08}{}&\SI{27.82}{}\\
\midrule
\multirow{22}{*}{\rotatebox[origin=c]{90}{$\textsc{IMP}(2)$}}&$E\to$\,$\tok{0}$&1&1&\SI{0.01}{}&\SI{0.01}{}&\SI{0.05}{}\\
&$E\to$\,$\tok{1}$&1&1&\SI{0.01}{}&\SI{0.01}{}&\SI{0.04}{}\\
&$S\to$\,$\tok{x--}$&2&2&\SI{0.06}{}&\SI{0.02}{}&\SI{0.11}{}\\
&$S\to$\,$\tok{y--}$&2&2&\SI{0.11}{}&\SI{0.03}{}&\SI{0.17}{}\\
&$B\to$\,$\tok{f}$&1&1&\SI{0.01}{}&\SI{0.01}{}&\SI{0.06}{}\\
&$S\to$\,$\tok{x++}$&2&2&\SI{0.04}{}&\SI{0.03}{}&\SI{0.11}{}\\
&$S\to$\,$\tok{y++}$&2&2&\SI{0.12}{}&\SI{0.02}{}&\SI{0.16}{}\\
&$B\to$\,$\tok{t}$&1&1&\SI{0.01}{}&\SI{0.02}{}&\SI{0.13}{}\\
&$E\to$\,$\tok{x}$&2&2&\SI{0.01}{}&\SI{0.01}{}&\SI{0.04}{}\\
&$E\to$\,$\tok{y}$&1&1&\SI{0.01}{}&\SI{0.01}{}&\SI{0.04}{}\\
&$S\to$\,$\tok{x}~\tok{\coloneqq}~E$&2&2&\SI{0.1}{}&\SI{3.23}{}&\SI{6.17}{}\\
&$S\to$\,$\tok{y}~\tok{\coloneqq}~E$&2&2&\SI{0.04}{}&\SI{3.22}{}&\SI{6.19}{}\\
&$B\to$\,$\lnot B$&3&3&\SI{0.02}{}&\SI{2.49}{}&\SI{5.26}{}\\
&$E\to$\,$E\mathbin{\tok{+}}E$&4&3&\SI{0.05}{}&\SI{8.52}{}&\SI{14.83}{}\\
&$E\to$\,$E\mathbin{\tok{-}}E$&5&2&\SI{0.13}{}&\SI{8.03}{}&\SI{13.83}{}\\
&$B\to$\,$E\mathbin{\tok{<}}E$&8&5&\SI{0.08}{}&\SI{7.5}{}&\SI{13.66}{}\\
&$B\to$\,$B \land B$&4&4&\SI{0.03}{}&\SI{5.33}{}&\SI{11.71}{}\\
&$B\to$\,$B \lor B$&4&4&\SI{0.05}{}&\SI{4.61}{}&\SI{8.99}{}\\
&$S\to$\,$S~\tok{;}~S$&5&3&\SI{4.55}{}&\SI{15.0}{}&\SI{72.53}{}\\
&$S\to$\,$\tok{do\_while}~S~B$&27&35&\SI{858.5}{}&\SI{257.33}{}&\SI{1374.13}{}\\
&$S\to$\,$\tok{while}~B~S$&9&7&\SI{16.88}{}&\SI{122.41}{}&\SI{266.8}{}\\
&$S\to$\,$\tok{ite}~B~S~S$&11&5&\SI{525.28}{}&\SI{33.88}{}&\SI{628.71}{}\\
\midrule
\multirow{9}{*}{\rotatebox[origin=c]{90}{\textsc{BinOp}}}&$B\to$\,$\tok{0}$&1&1&\SI{0.01}{}&\SI{0.01}{}&\SI{0.07}{}\\
&$B\to$\,$\tok{1}$&1&1&\SI{0.01}{}&\SI{0.01}{}&\SI{0.22}{}\\
&$B\to$\,$\tok{x}$&2&2&\SI{0.01}{}&\SI{0.01}{}&\SI{0.08}{}\\
&$N\to$\,$\tok{atom}~B$&2&2&\SI{0.09}{}&\SI{0.04}{}&\SI{0.3}{}\\
&$M\to$\,$\tok{atom'}~B$&3&3&\SI{0.07}{}&\SI{0.05}{}&\SI{0.26}{}\\
&$S\to$\,$\tok{bin2dec}~M$&2&2&\SI{0.02}{}&\SI{0.09}{}&\SI{0.3}{}\\
&$S\to$\,$\tok{count}~N$&2&2&\SI{0.04}{}&\SI{0.05}{}&\SI{0.24}{}\\
&$N\to$\,$\tok{concat}~N~B$&5&5&\SI{8.61}{}&\SI{0.22}{}&\SI{10.31}{}\\
&$M\to$\,$\tok{concat'}~M~B$&5&5&\SI{288.81}{}&\SI{0.23}{}&\SI{308.5}{}\\
\midrule
\multirow{10}{*}{\rotatebox[origin=c]{90}{$\textsc{RegEx}(2)$}}&
$Start\to$\,$\tok{eval}~R$&3&3&\SI{0.02}{}&\SI{4.43}{}&\SI{13.4}{}\\
&$R\to$\,$\tok{?}$&3&3&\SI{3.84}{}&\SI{0.07}{}&\SI{4.07}{}\\
&$R\to$\,$\tok{a}$&4&4&\SI{11.1}{}&\SI{0.07}{}&\SI{11.53}{}\\
&$R\to$\,$\tok{b}$&5&5&\SI{11.63}{}&\SI{0.06}{}&\SI{12.01}{}\\
&$R\to$\,$\tok{\epsilon}$&1&1&\SI{0.07}{}&\SI{0.07}{}&\SI{2.38}{}\\
&$R\to$\,$\tok{\emptyset}$&1&1&\SI{0.19}{}&\SI{0.07}{}&\SI{0.46}{}\\
&$R\to$\,$\mathord{\tok{!}}R$&5&5&\SI{2.85}{}&\SI{15.77}{}&\SI{77.36}{}\\
&$R\to$\,$R^{\tok{*}}$&6&6&\SI{0.99}{}&\SI{13.06}{}&\SI{31.91}{}\\
&$R\to$\,$R \mathbin{\tok{\cdot}} R$&24&24&\SI{333.71}{}&\SI{72.58}{}&\SI{495.45}{}\\
&$R\to$\,$R \mathbin{\tok{\mid}} R$&10&10&\SI{10.96}{}&\SI{59.54}{}&\SI{140.82}{}\\
\bottomrule
\end{longtable}
\endgroup
}

Table~\ref{tab:eval-short} presents a highlight of the results of running \toolname on each benchmark (column 1) for each production rule (column 2).
For the parametric languages, we ran each benchmark up to the largest parameter $k$ for which the solver timed out and reported the running time and other metrics for the largest such $k$ (more details below).
The third column provides the median number of CEGIS iterations taken to synthesize each production, and the fourth column provides the median number of $\langle \mathit{in}, \mathit{term}, \mathit{out} \rangle$ counterexamples found for one production rule.
We take the median of total execution time on one production rule and list it in column 7. Columns 5--6 are breakdowns of the total time into time for \sygus solving and time for SMT solving.
To summarize, \toolname could synthesize complete semantics for $12/15 \approx 80\%$ of benchmark languages (two languages exist for the $\textsc{Imp}$ benchmark, see below).

For $\textsc{RegEx}(k)$ ($k=2,\dots,8$)
, \toolname could synthesize a semantics for up to $k=2$.
For $\textsc{Cnf}(k)$ ($k=4,\dots,8$), $\textsc{Dnf}(k)$ ($k=4,\dots,8$), and $\textsc{Cube}(k)$ ($k=4,\dots,11$), \toolname could synthesize semantics for all parameters included in the \semgus benchmarks.
For the bit vector benchmarks, \toolname could synthesize a semantics for $\textsc{BVSimple}(k)$ up to $k=3$, and a semantics for $\textsc{BVIMPSimple}(m, n)$ ($(m,n) \in \{(1,2), (3,3)\}$) up to $m=1$ and $n=2$. 

%

For all these parametric cases that timeout, the number of input and output variables in semantic functions is large: 10 inputs and 10 outputs for $\textsc{RegEx}(3)$.

Additionally, \toolname timed out for the benchmarks \textsc{Diff}, \textsc{BVSaturated}, and \textsc{BVIMPSaturated}.\footnote{
  Data for some languages are only listed in the supplementary material.
}
For \textsc{Diff}, 4 of the 7 runs resulted in a timeout, so \textsc{Diff} is reported as a timeout ({even though at least one run could synthesize the semantics of all the productions}).
For the 4 runs that timed out, \toolname can solve the semantics of 5 of the 6 productions in the grammar.
%
\toolname could synthesize the semantics of 9/18 productions for \textsc{BVIMPSaturated}, and 10/17 productions for \textsc{BVSaturated} in at least one run.

In benchmarks that timed out, the time-out happened during a call to the \sygus solver---i.e., the functions to be synthesized were too complex (more details in \Cref{sec:rq2}).

\paragraph{Finding:} To answer RQ1, \toolname can synthesize semantics for many non-trivial languages as long as the semantics does not involve very large functions (more than 20 terms). 

\subsection{RQ2: Where is Time Spent during Synthesis?}
\label{sec:rq2}

\paragraph{\sygus vs SMT Time}
Appendix B also presents the breakdown of how much time the solver spends solving \sygus problems (to find candidate functions) and calling SMT solvers (to compute complete summaries).
Among all the benchmarks, a median of {$16.24\%$} of the total solving time is spent on \sygus problems, and a median of {$19.90\%$} of the time is spent solving SMT queries. However, for the slowest $10\%$ production rules ({>\SI{32.17}{s}}), the median of \sygus solving time grows to {$64.91\%$}, which indicates that \sygus contributes to most of the execution time on slow-running cases.
%
%

{
Among all benchmarks, $90\%$ of the per-production semantics are solved within \SI{32.17}{s}.}
The 12 rules that take longer than \SI{32.17}{s} to be synthesized are all non-leaf rules and their partial semantic constraints fall into the following three categories: \rone 5 of them contain large integers or complex SMT primitives (e.g., 32-bit integer division, theory of arrays); \rtwo 3 of them involve large logical formulas with sizes ranging between 8 and 24 subterms, e.g., formulas representing $3 \times 3$ matrix multiplication or other matrix operations; \rthree 4 of them contain {multiple} input and output parameters of semantic functions that correspond to variable states, e.g., $\tok{while}$ and $\tok{do\_while}$.
In particular, \toolname takes \SI{1374.13}{s} to synthesize the CHC for $\tok{do\_while}$ in $\textsc{Imp}(2)$ because there can be many possible ways to modify the data flow between the production's child terms; this aspect occurs in many CEGIS iterations.
In all of the above cases, as expected from known limitations of {CVC5}, the \sygus and SMT solvers account for most of the execution time---45.63\% and 27.98\% of the total running time is spent calling the \sygus and SMT solvers, respectively.

\paragraph{Relation to CEGIS Iterations and Size of Solutions}
\Cref{tab:eval-short} hints that the cost of synthesizing a semantics may be proportional to the number of CEGIS iterations, which in general is a good indicator of the complexity of a formula (and of how expressive the underlying \sygus grammar is). Additionally, the cost should also be proportional to the size of synthesized parts in the \sygus problems, which directly indicates formula complexity. We plotted \Cref{fig:rel-time} to better understand those relations by using the data from some slowest benchmarks.

\Cref{fig:rel-time-full-size} shows the relationship between the time for synthesizing a per-production rule semantics and the size of the final semantics.
For the same language, the time grows exponentially with the increase in the size of the final solution. \Cref{fig:rel-time-partial-size} shows that the time also grows exponentially with the increase in solution size for per-output partial semantic constraint.

\begin{figure}[t!]
  \begin{subfigure}{0.5\textwidth}
    \centering
    \includegraphics[width=\linewidth]{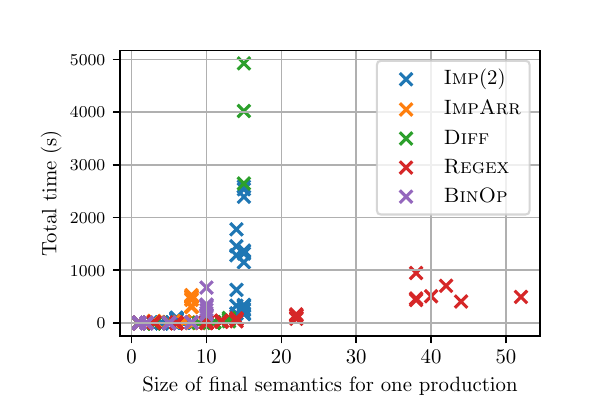}
    \caption{Time vs. semantic constraint size}
    \label{fig:rel-time-full-size}
  \end{subfigure}%
  \begin{subfigure}{0.5\textwidth}
    \centering
    \includegraphics[width=\linewidth]{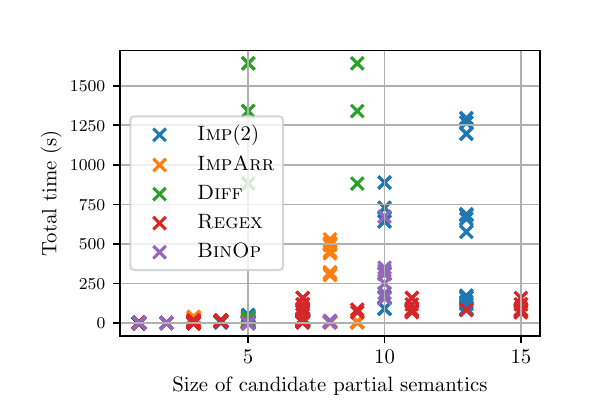}
    \caption{Time vs. partial semantic constraint size}
    \label{fig:rel-time-partial-size}
  \end{subfigure}
  \caption{Plots relating the time to synthesize the semantics of one production rule vs final semantic constraint solution size (a) and partial semantic constraint solution size (b). We only included selected slowest benchmarks due to graph size limit.}
  \label{fig:rel-time}
\end{figure}

Because the performance varies greatly
across different benchmarks, to better understand the impact of CEGIS iterations, we focus our attention on one difficult benchmark, $\textsc{Imp}(2)$.
Specifically, we analyze the time taken to synthesize the semantic rule for $\tok{do\_while}$, which was one of the hardest productions in our benchmark set (2,500s). 
Figure~\ref{fig:imp2-do-while} provides a stack plot detailing the running time for all 16 CEGIS iterations needed to synthesize $\tok{do\_while}$.
As expected, as more examples are accumulated by CEGIS iterations, the \sygus  solver requires more time.
The execution time for different parts is plotted by the areas of different colors. We can conclude that for the rule of $\tok{do\_while}$, \sygus solver takes $64.3\%$ of the execution time.

\paragraph{Simplifying synthesis with appropriate SMT theory.}
To our surprise, the language $\ImpArr$, which uses the theory of array to model an arbitrary number of variables, takes 1041.2s on average, being nearly twice as fast as $\textsc{Imp}(2)$. 
%
To understand why this is the case, note the difference in their semantic signatures
(\Cref{fig:imp-sigs}): the signature of $\textsc{Imp}(2)$'s semantics contains 3 input arguments and 2 output arguments. However, the signature of $\ImpArr$'s semantics contains only 2 input arguments and 1 output argument, packing program states into a single array rather than $k$ arguments (see \Cref{fig:imp-multi} for some examples). By choosing an 
appropriate theory, the signature of the semantics can be simplified, thus shrinking the solution space for synthesis.


\begin{figure}
    \centering
    \begin{align*}
        \sem{\cdot}_{\mathit{Sem.S}}^{\textsc{Imp}(2)}&: \mcL(S) \times \mathbb{Z} \times \mathbb{Z} \times \mathbb{Z} \times \mathbb{Z}\\
        \sem{\cdot}_{\mathit{Sem.S}}^{\ImpArr}&: \mcL(S) \times \mathcal{A}_{\mathbb{Z}}^{\mathbb{Z}} \times \mathcal{A}_{\mathbb{Z}}^{\mathbb{Z}}
    \end{align*}
    \caption{Selected differences of semantic signatures between $\textsc{Imp}(2)$ and $\ImpArr$. $\mathcal{A}_{\mathbb{Z}}^{\mathbb{Z}}$ stands for an SMT array mapping integers to integers.}
    \label{fig:imp-sigs}
\end{figure}

\begin{figure}
    \centering
    \begin{mathpar}     
        \inferrule{\semanticBracketsUsing{b}{\sigma} = v \\ \semanticBracketsUsing{s_1}{\sigma} = \sigma' \\ \semanticBracketsUsing{s_2}{\sigma} = \sigma''}{\semanticBracketsUsing{\tok{if}~b~\tok{then}~s_1~\tok{else}~s_2}{\sigma} = v \mathbin{?} \sigma' \mathbin{:} \sigma''}~\textsc{Ite} \and
        
        \inferrule{\semanticBracketsUsing{b}{\sigma} = \top \\ \semanticBracketsUsing{s}{\sigma} = \sigma' \\ \semanticBracketsUsing{\tok{while}~b~\tok{do}~s}{\sigma'} = \sigma''}{\semanticBracketsUsing{\tok{while}~b~\tok{do}~s}{\sigma} = \sigma''}~\textsc{WhileLoop}
        \and
        
        \inferrule{\semanticBracketsUsing{b}{\sigma} = \bot}{\semanticBracketsUsing{\tok{while}~b~\tok{do}~s}{\sigma} = (\sigma)}~\textsc{WhileEnd} \and
    \end{mathpar}
    \caption{Selected semantic rules for $\ImpArr$. $\sigma, \sigma', \sigma''$ are arrays.}
    \label{fig:imp-multi}
\end{figure}

\begin{figure}
    \centering
    \begin{minipage}{0.49\textwidth}
        \centering
        \resizebox{\linewidth}{!}{\includegraphics{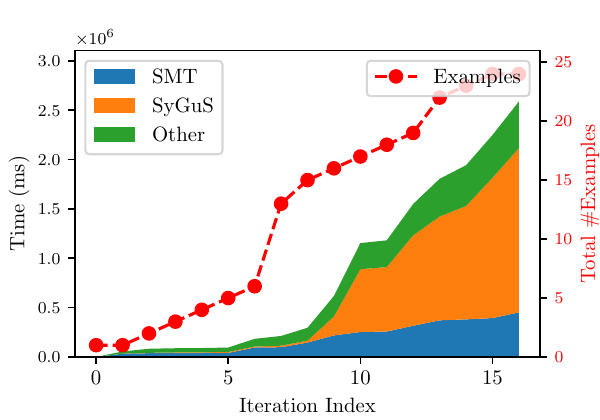}}
        \caption{Execution time per iteration for \tok{do\_while} in $\textsc{Imp}(2)$}
        \label{fig:imp2-do-while}
    \end{minipage}\hfill
    \begin{minipage}{0.5\textwidth}
        \centering
        \resizebox{\linewidth}{!}{\includegraphics{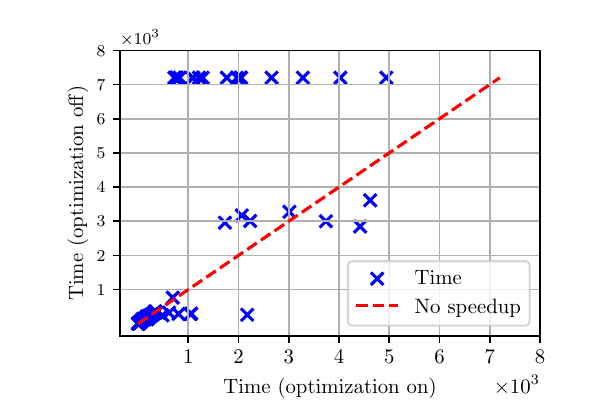}}
        \caption{Speedup provided by optimization}
        \label{fig:speedup}
    \end{minipage}
\end{figure}

\paragraph{Finding:} {To answer RQ2, \toolname spends most of the time ($71.78\%$) solving \emph{\sygus problems}, and the time is affected by the size of the candidate semantic function.}

\subsection{RQ3: Is the Multi-output Optimization from \Cref{sec:implementation:optimization} Effective?}


\Cref{fig:speedup} compares the running time of \toolname with and without the multi-output optimization  (\Cref{sec:implementation:optimization}) on all the runs of our tools for the 7 different random seeds.

With the optimization turned off, \toolname timed out on 10 more runs (specifically all the 7 runs for \textsc{RegEx} and 3 more runs for \textsc{Diff}).
All the benchmarks for which disabling the optimization caused a timeout have 3 or more output variables. 
Comparing \Cref{fig:rel-time-full-size} and \Cref{fig:rel-time-partial-size} shows how the semantic functions used in the \textsc{RegEx} benchmarks are very large (up to size 50), but thanks to the optimization, our algorithm only has to solve \sygus problems on formulas of size at most 15.

On the runs that terminated both with and without the optimization, the non-optimized algorithm is on average $8\%$ faster---i.e., the two versions of the algorithm have comparable performance.
However, for 15/98 runs the optimization results in a 20\% or more slowdown. 
When inspecting these instances, we observed that the multi-output optimization spent many iterations synchronizing the many possible data flows for productions where the final term was actually small but many variables were involved---e.g., sequential composition in $\textsc{Imp}(2)$.

\paragraph{Finding:} The multi-output optimization from \Cref{sec:implementation:optimization} is effective for languages with 3 or more output variables in their semantics.

\subsection{RQ4: How do Synthesized Semantics Compare to Manually Written Ones?}

The synthesized semantics for almost all of our benchmarks are
either identical to the original manually constructed one, or each CHC in the synthesized semantics is logically equivalent to the CHC of the original semantics.

\begin{figure}
    \centering
    \begin{subfigure}{\textwidth}
      \centering
      \begin{mathpar}
        \inferrule{\semanticBracketsUsing{e_1}{s} = {\begin{psmallmatrix}
        M_{0, 0} & M_{0, 1} \\ & M_{1, 1}
    \end{psmallmatrix}} \\ \semanticBracketsUsing{e_2}{s} = {\begin{psmallmatrix}
        M_{0, 0}' & M_{0, 1}' \\ & M_{1, 1}'
    \end{psmallmatrix}}}{\semanticBracketsUsing{e_1\mathbin{\tok{\cdot}}e_2}{s} =
                {\begin{psmallmatrix}
                 (M_{0,0} \land M'_{0,0}) & (M_{0,0} \land M'_{0,1}) \lor (M_{0,1} \land M'_{1,1}) & (M_{0,0} \land M'_{0,2}) \lor (M_{0,1} \land M'_{1,2}) \lor (M_{0,2} \land M'_{2,2})\\
                  & (M_{1,1} \land M'_{1,1}) & (M_{1,1} \land M'_{1,2}) \lor (M_{1,2} \land M'_{2,2})\\
                  & & (M_{2,2} \land M'_{2,2})
        \end{psmallmatrix}}}~\textsc{{ConcatM}} 
      \end{mathpar}
      \caption{Manually written semantics}
      \label{fig:diff-concat:manual}
    \end{subfigure}
    
    \begin{subfigure}{\textwidth}
      \centering
      \begin{mathpar}
        \inferrule{\semanticBracketsUsing{e_1}{s} = {\begin{psmallmatrix}
        M_{0, 0} & M_{0, 1} \\ & M_{1, 1}
    \end{psmallmatrix}} \\ \semanticBracketsUsing{e_2}{s} = {\begin{psmallmatrix}
        M_{0, 0}' & M_{0, 1}' \\ & M_{1, 1}'
    \end{psmallmatrix}}}{\semanticBracketsUsing{e_1\mathbin{\tok{\cdot}}e_2}{s} =
                {\begin{psmallmatrix}
                 M_{2,2} \land M'_{1,1}
                 & (M_{0,0} \land M'_{0,1}) \lor (M'_{0,0} \land M_{0,1})
                 & (M_{2,2} \land M'_{0,2}) \lor (M_{0,2} \land M'_{0,0}) \lor (M_{0,1} \land M'_{1,2})
                 \\
                 & (M_{0,0} \lor M_{2,2}) \land M'_{2,2}
                 & (M_{1,1}' \land M_{1,2}) \lor (M_{1,1} \land M_{1,2}')
                 \\
                 & & 
                 (M_{1,1} \land M'_{0,0})
        \end{psmallmatrix}}}~\textsc{ConcatS} 
      \end{mathpar}
      \caption{Synthesized Semantics}
      \label{fig:diff-concat:synthesized}
    \end{subfigure}
    \caption{Manually-written and synthesized semantics for \textsc{Concat} in \textsc{RegEx}$(2)$}
    \label{fig:diff-concat}
\end{figure}

The one exception is the semantics synthesized for the language of $\textsc{RegEx}(2)$, for which the individual CHCs for \textsc{Or}, \textsc{Concat}, \textsc{Neg}, and \textsc{Star} are not logically equivalent to the manually-written ones.
For instance, consider the Concat rule for the semantics of concatenation.
For this construct, the manually written CHC is shown in \Cref{fig:diff-concat:manual}, whereas \toolname synthesizes the CHC shown in \Cref{fig:diff-concat:synthesized}.
The two CHCs are not logically equivalent.
For example, if the children evaluate to the matrices 
$M=\begin{psmallmatrix}
    \top & \bot & \bot \\
         & \bot & \bot \\
         &      & \top
\end{psmallmatrix}$
and
$M'=\begin{psmallmatrix}
    \top & \bot & \bot \\
         & \bot & \bot \\
         &      & \top
\end{psmallmatrix}$,
the outputs computed by the manually written CHC 
and the synthesized CHC are
$M_{\textit{man}}=\begin{psmallmatrix}
    \top & \bot & \bot \\
         & \bot & \bot \\
         &      & \top \\
\end{psmallmatrix}$,
and
$M_{\textit{syn}}=\begin{psmallmatrix}
    \bot & \bot & \bot \\
         & \top & \bot \\
         &      & \bot \\
\end{psmallmatrix}$, respectively, which have different values on the diagonal.

When inspecting the two rules, we realized that the example matrices $M$ and $M'$ shown above cannot actually be produced by the semantic rules for regular expressions.
In particular, the examples require different Boolean values to appear on the diagonal of one $3 \times 3$ matrix.
However, all the elements on the diagonal represent the semantics of the regular expression on the empty string, so they must all have the same value!
We note that this inconsistency in the semantics can also be observed without a reference semantics to compare against 
because different runs of the algorithm could return logically inequivalent CHCs---in fact, such inequivalence was how we initially discovered the inconsistency.

\toolname helped us discover an inefficiency in the semantics that was being used in the standard regular expressions benchmarks in the \semgus repository.
We thus modified the interpreter so that for the example above it only produces a $2 \times 2$ matrix $M = \begin{psmallmatrix}
    M_{0,1} & M_{0, 2} \\ & M_{1, 2}
\end{psmallmatrix}$ 
(corresponding to the non-empty substrings of the input string) and a \textit{single} variable $M_\epsilon$ to denote whether the regular expression should accept the empty string (instead of the previous multiple copies of logically equivalent variables).
This semantics reduces the total number of variables in the semantic domain from 6 to 4 in this example.

We call this new semantics \textsc{RegExSimp} (see \Cref{fig:regex-simp-concat} for an example).
After modifying the interpreter to produce this new semantics, \toolname synthesized the corresponding CHCs in a median of \SI{1968}{s}.

\begin{figure}
    \centering
    \begin{mathpar}
    \inferrule{\semanticBracketsUsing{e_1}{s} = \left(M_\epsilon, {\begin{psmallmatrix}
        M_{0, 0} & M_{0, 1} \\ & M_{1, 1}
    \end{psmallmatrix}}\right) \\ \semanticBracketsUsing{e_2}{s} = \left(M_\epsilon', {\begin{psmallmatrix}
        M_{0, 0}' & M_{0, 1}' \\ & M_{1, 1}'
    \end{psmallmatrix}}\right)}{\semanticBracketsUsing{e_1\mathbin{\tok{\cdot}}e_2}{s} = \left(
            M_\epsilon \land M_\epsilon',
            {\begin{psmallmatrix}
            (M_\epsilon \land M_{0, 0}') \lor (M_{0, 0} \land M_\epsilon') &
            (M_\epsilon \land M_{0, 1}') \lor (M_{0, 0} \land M_{1, 1}') \lor (M_{0, 1} \land M_\epsilon')\\
            & (M_\epsilon \land M_{1, 1}') \lor (M_{1, 1} \land M_\epsilon')
            \end{psmallmatrix}}\right)}~\textsc{Concat} 
    \end{mathpar}
    \caption{New semantics for \textsc{Concat} in \textsc{RegExSimp}.}
    \label{fig:regex-simp-concat}
\end{figure}





%
To check whether the semantics \textsc{RegExSimp} is indeed more efficient than the original semantics \textsc{RegEx}, we modified all the 28 regular-expression synthesis benchmarks appearing in the \semgus benchmark set.
Each of these benchmarks requires one to find a regular expression that accepts some examples and rejects others.

We then used the \keithtoolname enumeration-based synthesizer to try to solve all the benchmarks with either of the two semantics.
Because \keithtoolname enumerates programs of increasing size and uses the semantics to execute them and discard invalid program candidates, we conjectured that executing programs faster allows \keithtoolname to explore the search space faster.

\keithtoolname was faster at solving synthesis problems with the \textsc{RegExSimp} semantics than with the \textsc{RegEx} ones (although both solved the same set of benchmarks).
Although the speedup over all benchmarks is only 1.1x, the new semantics \textsc{RegExSimp} \emph{was particularly beneficial} for the harder synthesis problems.
When considering the 13 benchmarks for which synthesis using the \textsc{RegEx} took longer than one second, the speedup increased to 1.18x.

\paragraph{Finding:}
\toolname synthesized semantics that were identical to the manually written ones for 13/14 benchmarks. When \toolname found a logically inequivalent semantics, it unveiled a performance bug.

\section{Related Work} \label{sec:related-work}

\textit{Semantics-based Synthesis vs. Library-based Synthesis.}

As discussed throughout the paper, our framework is intended for extracting a formal \semgus semantics that can be used to then take advantage of existing \semgus synthesizers.
One can compare our two-step approach (i.e., first synthesizing the semantics in \semgus format, then using it to synthesize programs) to one-step approaches that only use the given program interpreter in a closed-box fashion to evaluate input-output examples and use them to perform example-based synthesis.
(Such an approach is also used when synthesizing programs that contain calls to closed-box library functions~\cite{library-syntheiss}.)
%
%
On one hand, the closed-box example-based approach is flexible because it can be used for a library/language of any complexity.
On the other hand, our approach allows one to use any program synthesizer, even constraint-based ones~\cite{kim2021semantics}, and to synthesize programs that meet logical (and not just example-based) specifications (e.g., as in \cite{semgus-verifier})
our approach provides an explicit logical representation of the program semantics, whereas example-based approaches are limited to generate-and-test synthesis techniques, such as program enumeration~\cite{semgus-cav24} and example-based specifications.

\textit{Synthesis of Recursive Programs.}
At a high level, the semantics-synthesis problem we consider is similar to a number of works on synthesizing recursively defined programs \cite{partial-bounding-rec,farzan2022recursion,miltner2022bottom,lee2023inductive}. In effect, a semantics for a recursively defined grammar is a recursive program assigning meaning to programs within the language. Both \citet{partial-bounding-rec,farzan2022recursion} use recursion skeletons to reduce their task from synthesizing a recursive program to synthesizing a non-recursive program. Our use of semantic constraints plays a similar role. While both of their techniques assume programs are only structurally recursive (i.e., no recursion on the program term itself), and our framework explicitly allows for program terms that are self-recursive (e.g., $\tok{while}$ loops in \Imp).

Similar to the approach used by \citet{miltner2022bottom} to synthesize simple recursive programs, \algname employs a bottom-up approach to synthesis (i.e., we first synthesize semantics for nullary productions before moving on to other productions). However, unlike \citeauthor{miltner2022bottom}, \algname is well-defined for any ordering of production rules and targets a more complex setting---i.e., synthesizing program semantics. 

Finally, \citet{lee2023inductive} synthesize recursive procedures from examples by first finding likely sub-expressions that can be used to build a complex recursive program and then guessing the recursive structure of the program.
The key difference is that our approach targets a more restricted program, synthesizing program semantics, and therefore already has the recursive structure in hand (thanks to the presence of the grammar in the problems \toolname is solving). 
Because we have limited the synthesis target to an inductively defined program semantics, \toolname can directly focus its effort on
synthesizing the semantic functions of each CHC using well-known synthesis techniques.


\vspace{1mm}\textit{Datalog Synthesis.}
\citet{constraint-based-datalog-synthesis} synthesize Datalog programs (i.e., Horn clauses) with SMT solvers, whereas \citet{sygus-datalog} use a syntax-guided approach. In our work, we use constrained Horn clauses, which are strictly more expressive than Datalog programs, to denote semantics. Aside from the fact that the Datalog-synthesis problem considers different inputs (i.e., the data), a CHC also contains a function in a theory $\mathcal{T}$ (such as \texttt{LIA} or \texttt{BV}), which \toolname has to synthesize.

\vspace{1mm}\textit{Synthesizing Attribute Grammars.}
\citet{panini-paper} proposed a sketch-based method for synthesizing attribute grammars. When provided with a context-free grammar, their tool can automatically create appropriate \emph{semantic actions} from sketches of attribute grammars. Instead of semantic actions, in our work we use CHCs to express program semantics. Our approach can model recursive semantics whereas the technique by \citeauthor{panini-paper} is limited to non-circular attribute grammars. Additionally, while their method requires providing a distinct program sketch (i.e., a partial program) for each production, our approach  only  requires providing a (fairly general) \sygus grammar for each nonterminal in the language. 
\section{Conclusion} \label{sec:conclusion}

Writing logical semantics for a language can be a difficult task and our work supplies a method to automatically synthesize a language's semantics from an executable interpreter that is treated as a closed-box. By generating example terms and input-output pairs from the interpreter,
we use a \sygus solver to synthesize semantic rules. Our evaluation shows that the approach applies to a wide range of language features, e.g., recursive semantic functions  with multiple outputs.



As discussed in \Cref{sec:example}, one motivation for this work is to be able to generate automatically the kind of semantics that is needed  to create a program synthesizer using the \semgus framework.
In our algorithm, we harness a \sygus solver to synthesize the constraint in each CHC---i.e., we harness \sygus in service to \semgus---which limits us to synthesizing constraints that are written in theories that \sygus supports.
Going forward, we would like to make use of ``higher-level'' theories, supporting such abstractions as stores or algebraic data types.
As \semgus-based synthesizers and verifiers improve, we might be able to satisfy this wish by using \semgus in service to \semgus!
That is, we could extend \toolname to use \semgus solvers to synthesize semantic constraints.

\section*{Data-Availability Statement}
The artifact that contains \toolname and all benchmark data is available on Zenodo~\cite{artifact-doi}.

\begin{acks}
Supported, in part, by a \grantsponsor{00001}{Microsoft}{https://www.microsoft.com/en-us/research/academic-program/faculty-fellowship/} Faculty Fellowship; a gift from \grantsponsor{00002}{Rajiv and Ritu Batra}{}; and \grantsponsor{00003}{NSF}{https://www.nsf.gov/} under grants \grantnum{00003}{CCF-1750965}, \grantnum{00003}{CCF-1918211}, \grantnum{00003}{CCF-2023222}, \grantnum{00003}{CCF-2211968}, and \grantnum{00003}{CCF-2212558}. Any opinions, findings, and conclusions or recommendations expressed in this publication are those of the authors, and do not necessarily reflect the views of the sponsoring entities.
\end{acks}

\bibliographystyle{ACM-Reference-Format}
\bibliography{references}
\citestyle{acmauthoryear}   

\newpage
\appendix
\section{Semantics for languages used in benchmark}
\label{appendix:list}
In this section, we present the semantics synthesized by our tool \toolname for any languages referenced within the main text. 
\Cref{appendix:imparr} provides a detailed discussion of how {\textsc{ImpArr}} extends the {$\textsc{Imp}(k)$} benchmarks to support an unbounded number of variables.
\Cref{apendix:semgus-bench} provides the synthesized semantics of \semgus benchmarks and \Cref{apendix:grammar-bench} presents the synthesized semantics of the attribute grammar benchmarks.

\subsection{The {\textsc{ImpArr}} Benchmark with the Theory of Arrays}
\label{appendix:imparr}

Languages like $\textsc{Imp}(k)$ are limited to an \textit{a priori} defined number of variables. For example, $\textsc{Imp}(2)$ only allows two variables to be used in the language, because it \emph{explicitly} passes the values of two variables as \emph{two integer arguments} of semantic functions ($v_0, v_1$ in \Cref{fig:imp-semantics-1,fig:imp-semantics-2}). 
For a language like $\textsc{Imp}$, we would rather have \emph{one} semantics that can handle an unbounded number of variables. 

Instead of using the explicit arguments $v_0, v_1, \dots, v_k$, we can use an array argument to store the variable values. 
The SMT theory of arrays \cite{smtlib2-fmt} provides select-store axioms that conveniently allow us to access the elements of this array. 
The \texttt{(select i a)} axiom extracts the value at index $i$ from an array $a$. The \texttt{(store a i v)} axiom returns a new array that is identical to $a$, except that the value at index $i$ is changed to $v$.
According to these two axioms, the length of the array is not explicitly defined. 
Therefore, arrays are suitable of expressing stores of unbounded size.
Variable evaluation and assignment can be respectively expressed using \texttt{select} and \texttt{store}.
The semantics of \ImpArr~that uses the theory of arrays is given in \Cref{fig:imp-multi-semantics}.

\subsection{\semgus Benchmarks} \label{apendix:semgus-bench}
The \semgus suite of benchmarks consists of a total of 11 languages. We present the synthesized semantics of each as follows:
\begin{enumerate}
\item $\textsc{Cnf}(k)$ is depicted in \Cref{fig:cnf-semantics}.
\item $\textsc{Dnf}(k)$ is depicted in \Cref{fig:dnf-semantics}.
\item $\textsc{Cube}(k)$ is depicted in \Cref{fig:cube-semantics}.
\item $\textsc{IntArith}$ is depicted in \Cref{fig:int-arith-semantics}.
\item $\textsc{RegEx}(2)$ is depicted in \Cref{fig:regex-semantics}.
\item $\textsc{RegExSimp}(2)$ is depicted in \Cref{fig:regex-simp-semantics}.
\item $\textsc{Imp}(2)$ is depicted in \Cref{fig:imp-semantics-1,fig:imp-semantics-2}.
\item $\ImpArr$ is depicted in \Cref{fig:imp-multi-semantics}.
\item $\textsc{BvSimple}(k)$ is depicted in \Cref{fig:bv-simple-semantics}.
\item $\textsc{BvSaturated}(k)$ is depicted in \Cref{fig:bv-saturated-semantics}.
\item $\textsc{BVImpSimple}(m, n)$ is depicted in \Cref{fig:bv-imp-simple-semantics}.
\item $\textsc{BVImpSaturated}(m, n)$ is depicted in \Cref{fig:bv-imp-saturated-semantics}.
\end{enumerate}

\subsection{Attribute-Grammar Synthesis} \label{apendix:grammar-bench}
The suite of attribute-grammar benchmarks from \cite{panini-paper} consists of four languages which we present as follows:
\begin{enumerate}
\item \textsc{BinOp} is presented in \Cref{fig:binop-semantics}.
\item \textsc{Currency} is presented in \Cref{fig:currency-semantics}.
\item \textsc{Diff} is presented in \Cref{fig:diff-semantics}.
\item \textsc{IteExpr} is presented in \Cref{fig:ite-expr-semantics}.
\end{enumerate}

\section{Benchmark Data}
We present the full detailed evaluation results for all languages and productions rules in \Cref{tab:eval}. For each production we present the number of CEGIS iterations, number of generated examples, and execution time \rone to solve \sygus problems, \rtwo to solve SMT queries, and \rthree overall. For each column we repor the number for the median run based on the total run time. See \Cref{sec:evaluation} for full description of experimental setup.

%
%
%

\begin{figure}[h]
    \centering
    \input{semantics/semgus/cnf}
    \caption{Semantics of $\textsc{Cnf}(k)$}
    \label{fig:cnf-semantics}
\end{figure}

\begin{figure}[h]
    \centering
    \input{semantics/semgus/dnf}
    \caption{Semantics of $\textsc{Dnf}(k)$}
    \label{fig:dnf-semantics}
\end{figure}

\begin{figure}[h]
    \centering
    \input{semantics/semgus/cube}
    \caption{Semantics of $\textsc{Cube}(k)$}
    \label{fig:cube-semantics}
\end{figure}

\begin{figure}[h]
    \centering
    \input{semantics/semgus/int-arith}
    \caption{Semantics of \textsc{IntArith}}
    \label{fig:int-arith-semantics}
\end{figure}

\begin{figure}[h]
    \centering
    \input{semantics/semgus/regex}
    \caption{Semantics of \textsc{RegEx}}
    \label{fig:regex-semantics}
\end{figure}

\begin{figure}[h]
    \centering
    \input{semantics/semgus/regex-simp}
    \caption{Semantics of \textsc{RegExSimp}}
    \label{fig:regex-simp-semantics}
\end{figure}

\begin{figure}[h]
    \centering
    \input{semantics/semgus/imp-1}
    \caption{Semantics of $\textsc{Imp}(2)$, part 1}
    \label{fig:imp-semantics-1}
\end{figure}

\begin{figure}[h]
    \centering
    \input{semantics/semgus/imp-2}
    \caption{Semantics of $\textsc{Imp}(2)$, part 2}
    \label{fig:imp-semantics-2}
\end{figure}

\begin{figure}[h]
    \centering
    \input{semantics/semgus/imp-multi}
    \caption{Semantics of $\ImpArr$}
    \label{fig:imp-multi-semantics}
\end{figure}

\begin{figure}[h]
    \centering
    \input{semantics/semgus/bv-simple}
    \caption{Semantics of $\textsc{BvSimple}(k)$}
    \label{fig:bv-simple-semantics}
\end{figure}

\begin{figure}[h]
    \centering
    \input{semantics/semgus/bv-saturated}
    \caption{Semantics of $\textsc{BvSaturated}(k)$}
    \label{fig:bv-saturated-semantics}
\end{figure}

\begin{figure}[h]
    \centering
    \input{semantics/semgus/bv-imp-simple}
    \caption{Semantics of $\textsc{BvImpSimple}(k)$}
    \label{fig:bv-imp-simple-semantics}
\end{figure}

\begin{figure}[h]
    \centering
    \input{semantics/semgus/bv-imp-saturated}
    \caption{Semantics of $\textsc{BvImpSaturated}(k)$}
    \label{fig:bv-imp-saturated-semantics}
\end{figure}

%
%
%

\begin{figure}[hb!]
    \centering
    \input{semantics/panini/binop}
    \caption{Semantics of \textsc{BinOp}}
    \label{fig:binop-semantics}
\end{figure}

\begin{figure}[hb!]
    \centering
    \input{semantics/panini/currency}
    \caption{Semantics of \textsc{Currency}}
    \label{fig:currency-semantics}
\end{figure}

\begin{figure}[hb!]
    \centering
    \input{semantics/panini/diff}
    \caption{Semantics of \textsc{Diff}}
    \label{fig:diff-semantics}
\end{figure}

\begin{figure}[hb!]
    \centering
    \input{semantics/panini/ite-expr}
    \caption{Semantics of \textsc{IteExpr}}
    \label{fig:ite-expr-semantics}
\end{figure}

%
%
%

\afterpage{
\begingroup
\renewcommand\arraystretch{0.6}
\footnotesize
\begin{longtable}{
    c
    l
    r
    r
    r
    r
    r
}
\caption{Evaluation results with optimization turned on.\label{tab:eval}\protect
\footnotemark}\\
\toprule
\footnotetext{Note: The label (T$i$) in the language name means the language timeouts under $i$ runs.}
{Lang.} & {Rule}%
    & {\# Iter.} & {\# Ex} &  {SyGuS (s)} & {SMT (s)} & {Total (s)}           \\
\midrule
\multirow{21}{*}{\rotatebox[origin=c]{90}{$\textsc{BVSimple}(3)$}}&$E\to$\,$\tok{0}$&1&1&\SI{0.01}{}&\SI{0.02}{}&\SI{0.29}{}\\
&$E\to$\,$\tok{1}$&1&1&\SI{0.01}{}&\SI{0.01}{}&\SI{0.14}{}\\
&$E\to$\,$\tok{v_0}$&1&1&\SI{0.01}{}&\SI{0.01}{}&\SI{0.13}{}\\
&$E\to$\,$\tok{v_1}$&1&1&\SI{0.01}{}&\SI{0.02}{}&\SI{0.11}{}\\
&$E\to$\,$\tok{v_2}$&1&1&\SI{0.02}{}&\SI{0.01}{}&\SI{0.09}{}\\
&$E\to$\,$\mathord{\tok{-}}E$&2&2&\SI{0.09}{}&\SI{1.09}{}&\SI{1.92}{}\\
&$E\to$\,$\mathord{\tok{\sim}}E$&2&2&\SI{0.06}{}&\SI{0.88}{}&\SI{1.65}{}\\
&$E\to$\,$\tok{any\_bit}~E$&4&4&\SI{1.93}{}&\SI{0.93}{}&\SI{3.64}{}\\
&$E\to$\,$E\mathbin{\tok{+}}E$&9&2&\SI{1.98}{}&\SI{0.89}{}&\SI{4.12}{}\\
&$E\to$\,$E\mathbin{\tok{\&}}E$&6&2&\SI{2.67}{}&\SI{1.47}{}&\SI{7.31}{}\\
&$E\to$\,$E\mathbin{\tok{\div}}E$&6&2&\SI{1.39}{}&\SI{37.08}{}&\SI{62.33}{}\\
&$E\to$\,$E\mathbin{\tok{=}}E$&19&6&\SI{69.51}{}&\SI{4.16}{}&\SI{77.67}{}\\
&$E\to$\,$E\mathbin{\tok{\ggg}}E$&9&3&\SI{1.72}{}&\SI{2.05}{}&\SI{5.54}{}\\
&$E\to$\,$E\mathbin{\tok{\times}}E$&9&2&\SI{2.31}{}&\SI{0.69}{}&\SI{3.98}{}\\
&$E\to$\,$E\mathbin{\tok{\mid}}E$&10&3&\SI{2.55}{}&\SI{1.19}{}&\SI{4.98}{}\\
&$E\to$\,$E\mathbin{\tok{\ll}}E$&9&3&\SI{1.6}{}&\SI{3.13}{}&\SI{6.71}{}\\
&$E\to$\,$E\mathbin{\tok{-}}E$&9&2&\SI{1.28}{}&\SI{1.61}{}&\SI{4.61}{}\\
&$S\to$\,$E\mathbin{\tok{\ge}}E$&14&6&\SI{4.73}{}&\SI{2.34}{}&\SI{10.39}{}\\
&$S\to$\,$E\mathbin{\tok{\le}}E$&13&5&\SI{2.55}{}&\SI{1.84}{}&\SI{6.75}{}\\
&$S\to$\,$E\mathbin{\tok{<}}E$&6&5&\SI{0.46}{}&\SI{1.54}{}&\SI{3.62}{}\\
&$E\to$\,$E\mathbin{\tok{\oplus}}E$&9&3&\SI{2.0}{}&\SI{1.56}{}&\SI{5.1}{}\\
\midrule
\multirow{11}{*}{\rotatebox[origin=c]{90}{$\textsc{BVSaturated}(2)$(T7)}}&$E\to$\,$\tok{0}$&1&1&\SI{0.01}{}&\SI{0.01}{}&\SI{0.15}{}\\
&$E\to$\,$\tok{1}$&1&1&\SI{0.01}{}&\SI{0.01}{}&\SI{0.06}{}\\
&$E\to$\,$\tok{v_0}$&1&1&\SI{0.01}{}&\SI{0.01}{}&\SI{0.05}{}\\
&$E\to$\,$\tok{v_1}$&1&1&\SI{0.01}{}&\SI{0.01}{}&\SI{0.05}{}\\
&$E\to$\,$E\mathbin{\tok{\&}}E$&5&3&\SI{0.46}{}&\SI{0.94}{}&\SI{3.41}{}\\
&$E\to$\,$E\mathbin{\tok{\ggg}}E$&6.0&3.0&\SI{1.07}{}&\SI{2.67}{}&\SI{5.65}{}\\
&$E\to$\,$E\mathbin{\tok{\times}}E$&10.0&6.0&\SI{404.76}{}&\SI{3.09}{}&\SI{414.49}{}\\
&$E\to$\,$E\mathbin{\tok{\mid}}E$&5.5&3.0&\SI{1.62}{}&\SI{1.3}{}&\SI{4.31}{}\\
&$E\to$\,$E\mathbin{\tok{\ll}}E$&6.0&3.0&\SI{0.7}{}&\SI{2.03}{}&\SI{4.41}{}\\
&$E\to$\,$E\mathbin{\tok{-}}E$&6.0&5.0&\SI{5.46}{}&\SI{3.07}{}&\SI{12.22}{}\\
&$E\to$\,$E\mathbin{\tok{\oplus}}E$&5.5&2.5&\SI{0.64}{}&\SI{1.03}{}&\SI{2.94}{}\\
\midrule
\multirow{24}{*}{\rotatebox[origin=c]{90}{$\textsc{BVIMPSimple}(1, 2)$}}&$E\to$\,$\tok{0}$&1&1&\SI{0.01}{}&\SI{0.02}{}&\SI{0.36}{}\\
&$E\to$\,$\tok{1}$&1&1&\SI{0.01}{}&\SI{0.03}{}&\SI{0.17}{}\\
&$E\to$\,$\tok{o0}$&1&1&\SI{0.01}{}&\SI{0.02}{}&\SI{0.1}{}\\
&$E\to$\,$\tok{o1}$&1&1&\SI{0.01}{}&\SI{0.01}{}&\SI{0.07}{}\\
&$E\to$\,$\tok{v0}$&1&1&\SI{0.01}{}&\SI{0.01}{}&\SI{0.14}{}\\
&$S\to$\,$\tok{o0}~\tok{\coloneqq}~E$&1&1&\SI{0.35}{}&\SI{0.49}{}&\SI{1.27}{}\\
&$S\to$\,$\tok{o1}~\tok{\coloneqq}~E$&2&2&\SI{0.32}{}&\SI{0.38}{}&\SI{1.47}{}\\
&$E\to$\,$\mathord{\tok{-}}E$&2&2&\SI{0.07}{}&\SI{0.79}{}&\SI{1.64}{}\\
&$E\to$\,$\mathord{\tok{\sim}}E$&2&2&\SI{0.08}{}&\SI{0.61}{}&\SI{1.47}{}\\
&$E\to$\,$\tok{any\_bit}~E$&4&4&\SI{1.3}{}&\SI{0.79}{}&\SI{2.72}{}\\
&$E\to$\,$E\mathbin{\tok{+}}E$&5&2&\SI{1.19}{}&\SI{0.92}{}&\SI{3.48}{}\\
&$E\to$\,$E\mathbin{\tok{\&}}E$&7&3&\SI{3.65}{}&\SI{1.46}{}&\SI{8.17}{}\\
&$E\to$\,$E\mathbin{\tok{\div}}E$&7&3&\SI{2.58}{}&\SI{35.0}{}&\SI{60.76}{}\\
&$E\to$\,$E\mathbin{\tok{=}}E$&20&6&\SI{83.47}{}&\SI{19.99}{}&\SI{108.1}{}\\
&$E\to$\,$E\mathbin{\tok{\ggg}}E$&9&3&\SI{2.52}{}&\SI{2.19}{}&\SI{6.4}{}\\
&$E\to$\,$E\mathbin{\tok{\times}}E$&9&3&\SI{2.48}{}&\SI{0.86}{}&\SI{4.39}{}\\
&$E\to$\,$E\mathbin{\tok{\mid}}E$&9&3&\SI{2.0}{}&\SI{1.1}{}&\SI{4.3}{}\\
&$E\to$\,$E\mathbin{\tok{\ll}}E$&10&3&\SI{1.83}{}&\SI{2.59}{}&\SI{6.72}{}\\
&$E\to$\,$E\mathbin{\tok{-}}E$&7&2&\SI{1.71}{}&\SI{1.52}{}&\SI{4.99}{}\\
&$B\to$\,$S\mathbin{\tok{\ge}}S$&26&8&\SI{18.24}{}&\SI{2.87}{}&\SI{25.36}{}\\
&$B\to$\,$S\mathbin{\tok{\le}}S$&23&6&\SI{13.59}{}&\SI{1.06}{}&\SI{16.75}{}\\
&$B\to$\,$S\mathbin{\tok{<}}S$&6&5&\SI{0.33}{}&\SI{1.02}{}&\SI{2.67}{}\\
&$E\to$\,$E\mathbin{\tok{\oplus}}E$&6&2&\SI{1.34}{}&\SI{1.1}{}&\SI{3.5}{}\\
&$S\to$\,$S~\tok{;}~S$&13&3&\SI{535.19}{}&\SI{15.61}{}&\SI{590.21}{}\\
\midrule
\multirow{13}{*}{\rotatebox[origin=c]{90}{$\textsc{Cube}(11)$}}&$V\to$\,$\tok{v0}$&2&2&\SI{0.01}{}&\SI{0.01}{}&\SI{0.12}{}\\
&$V\to$\,$\tok{v1}$&2&2&\SI{0.01}{}&\SI{0.01}{}&\SI{0.06}{}\\
&$V\to$\,$\tok{v10}$&3&3&\SI{0.02}{}&\SI{0.01}{}&\SI{0.05}{}\\
&$V\to$\,$\tok{v2}$&2&2&\SI{0.01}{}&\SI{0.01}{}&\SI{0.05}{}\\
&$V\to$\,$\tok{v3}$&2&2&\SI{0.01}{}&\SI{0.01}{}&\SI{0.05}{}\\
&$V\to$\,$\tok{v4}$&3&3&\SI{0.01}{}&\SI{0.01}{}&\SI{0.03}{}\\
&$V\to$\,$\tok{v5}$&3&3&\SI{0.01}{}&\SI{0.01}{}&\SI{0.04}{}\\
&$V\to$\,$\tok{v6}$&3&3&\SI{0.01}{}&\SI{0.01}{}&\SI{0.06}{}\\
&$V\to$\,$\tok{v7}$&3&4&\SI{0.02}{}&\SI{0.01}{}&\SI{0.06}{}\\
&$V\to$\,$\tok{v8}$&3&3&\SI{0.02}{}&\SI{0.01}{}&\SI{0.05}{}\\
&$V\to$\,$\tok{v9}$&3&4&\SI{0.01}{}&\SI{0.01}{}&\SI{0.05}{}\\
&$B\to$\,$\tok{var}~V$&4&4&\SI{0.06}{}&\SI{0.51}{}&\SI{1.07}{}\\
&$B\to$\,$B \land B$&116&8&\SI{1810.4}{}&\SI{4.43}{}&\SI{1831.92}{}\\
\midrule\pagebreak\midrule
\multirow{6}{*}{\rotatebox[origin=c]{90}{\textsc{Diff}(T4)}}&$E\to$\,$\tok{0}$&1&1&\SI{0.01}{}&\SI{0.01}{}&\SI{0.11}{}\\
&$E\to$\,$\tok{1}$&1&1&\SI{0.01}{}&\SI{0.01}{}&\SI{0.06}{}\\
&$E\to$\,$\tok{2}$&1&1&\SI{0.02}{}&\SI{0.01}{}&\SI{0.06}{}\\
&$E\to$\,$\tok{x}$&2&2&\SI{0.35}{}&\SI{0.01}{}&\SI{0.41}{}\\
&$E\to$\,$E\mathbin{\tok{\times}}E$&5&5&\SI{92.68}{}&\SI{1.0}{}&\SI{95.15}{}\\
&$E\to$\,$E\mathbin{\tok{+}}E$&3&3&\SI{9.12}{}&\SI{2.09}{}&\SI{13.47}{}\\
\midrule
\multirow{11}{*}{\rotatebox[origin=c]{90}{
         $\textsc{BVIMPSat.(1, 2)}$(T7)
         }}& & & & & & \\
& & & & & & \\
&$E\to$\,$\tok{0}$&1&1&\SI{0.01}{}&\SI{0.02}{}&\SI{0.37}{}\\
&$E\to$\,$\tok{1}$&1&1&\SI{0.01}{}&\SI{0.01}{}&\SI{0.14}{}\\
&$E\to$\,$\tok{o0}$&1&1&\SI{0.01}{}&\SI{0.01}{}&\SI{0.12}{}\\
&$E\to$\,$\tok{o1}$&1&1&\SI{0.01}{}&\SI{0.01}{}&\SI{0.09}{}\\
&$E\to$\,$\tok{v0}$&1&1&\SI{0.01}{}&\SI{0.02}{}&\SI{0.14}{}\\
&$E\to$\,$E\mathbin{\tok{+}}E$&16&3&\SI{31.6}{}&\SI{1.11}{}&\SI{34.15}{}\\
&$E\to$\,$E\mathbin{\tok{\&}}E$&6&3&\SI{1.75}{}&\SI{1.89}{}&\SI{7.24}{}\\
& & & & & & \\
& & & & & & \\
\midrule
\multirow{13}{*}{\rotatebox[origin=c]{90}{$\textsc{CNF}(8)$}}&$V\to$\,$\tok{v0}$&2&2&\SI{0.01}{}&\SI{0.01}{}&\SI{0.12}{}\\
&$V\to$\,$\tok{v1}$&2&2&\SI{0.01}{}&\SI{0.01}{}&\SI{0.04}{}\\
&$V\to$\,$\tok{v2}$&2&2&\SI{0.01}{}&\SI{0.01}{}&\SI{0.05}{}\\
&$V\to$\,$\tok{v3}$&2&3&\SI{0.01}{}&\SI{0.01}{}&\SI{0.04}{}\\
&$V\to$\,$\tok{v4}$&2&2&\SI{0.01}{}&\SI{0.01}{}&\SI{0.03}{}\\
&$V\to$\,$\tok{v5}$&3&3&\SI{0.01}{}&\SI{0.01}{}&\SI{0.04}{}\\
&$V\to$\,$\tok{v6}$&3&3&\SI{0.01}{}&\SI{0.01}{}&\SI{0.04}{}\\
&$V\to$\,$\tok{v7}$&3&4&\SI{0.01}{}&\SI{0.01}{}&\SI{0.04}{}\\
&$B\to$\,$\tok{clause}~C$&4&4&\SI{0.03}{}&\SI{0.27}{}&\SI{0.48}{}\\
&$C\to$\,$\tok{nvar}~V$&5&5&\SI{0.05}{}&\SI{0.32}{}&\SI{0.7}{}\\
&$C\to$\,$\tok{var}~V$&4&4&\SI{0.05}{}&\SI{0.31}{}&\SI{0.74}{}\\
&$B\to$\,$C \land B$&39&6&\SI{30.52}{}&\SI{0.56}{}&\SI{31.83}{}\\
&$C\to$\,$V \lor C$&41&8&\SI{37.03}{}&\SI{0.69}{}&\SI{38.62}{}\\
\midrule
\multirow{13}{*}{\rotatebox[origin=c]{90}{$\textsc{DNF}(8)$}}&$V\to$\,$\tok{v0}$&2&2&\SI{0.01}{}&\SI{0.01}{}&\SI{0.13}{}\\
&$V\to$\,$\tok{v1}$&2&2&\SI{0.01}{}&\SI{0.01}{}&\SI{0.04}{}\\
&$V\to$\,$\tok{v2}$&2&2&\SI{0.01}{}&\SI{0.01}{}&\SI{0.04}{}\\
&$V\to$\,$\tok{v3}$&2&3&\SI{0.01}{}&\SI{0.01}{}&\SI{0.04}{}\\
&$V\to$\,$\tok{v4}$&2&2&\SI{0.01}{}&\SI{0.01}{}&\SI{0.03}{}\\
&$V\to$\,$\tok{v5}$&3&3&\SI{0.01}{}&\SI{0.01}{}&\SI{0.04}{}\\
&$V\to$\,$\tok{v6}$&3&3&\SI{0.01}{}&\SI{0.01}{}&\SI{0.03}{}\\
&$V\to$\,$\tok{v7}$&3&4&\SI{0.01}{}&\SI{0.01}{}&\SI{0.05}{}\\
&$B\to$\,$\tok{conj}~C$&4&4&\SI{0.05}{}&\SI{0.3}{}&\SI{0.57}{}\\
&$C\to$\,$\tok{nvar}~V$&5&5&\SI{0.05}{}&\SI{0.33}{}&\SI{0.71}{}\\
&$C\to$\,$\tok{var}~V$&4&4&\SI{0.05}{}&\SI{0.32}{}&\SI{0.76}{}\\
&$C\to$\,$V \land C$&33&7&\SI{28.84}{}&\SI{0.36}{}&\SI{29.75}{}\\
&$B\to$\,$C \lor B$&72&6&\SI{93.47}{}&\SI{0.79}{}&\SI{95.62}{}\\
\midrule
\multirow{22}{*}{\rotatebox[origin=c]{90}{$\textsc{Imp}(2)$}}&$E\to$\,$\tok{0}$&1&1&\SI{0.01}{}&\SI{0.01}{}&\SI{0.05}{}\\
&$E\to$\,$\tok{1}$&1&1&\SI{0.01}{}&\SI{0.01}{}&\SI{0.04}{}\\
&$S\to$\,$\tok{x--}$&2&2&\SI{0.06}{}&\SI{0.02}{}&\SI{0.11}{}\\
&$S\to$\,$\tok{y--}$&2&2&\SI{0.11}{}&\SI{0.03}{}&\SI{0.17}{}\\
&$B\to$\,$\tok{f}$&1&1&\SI{0.01}{}&\SI{0.01}{}&\SI{0.06}{}\\
&$S\to$\,$\tok{x++}$&2&2&\SI{0.04}{}&\SI{0.03}{}&\SI{0.11}{}\\
&$S\to$\,$\tok{y++}$&2&2&\SI{0.12}{}&\SI{0.02}{}&\SI{0.16}{}\\
&$B\to$\,$\tok{t}$&1&1&\SI{0.01}{}&\SI{0.02}{}&\SI{0.13}{}\\
&$E\to$\,$\tok{x}$&2&2&\SI{0.01}{}&\SI{0.01}{}&\SI{0.04}{}\\
&$E\to$\,$\tok{y}$&1&1&\SI{0.01}{}&\SI{0.01}{}&\SI{0.04}{}\\
&$S\to$\,$\tok{x}~\tok{\coloneqq}~E$&2&2&\SI{0.1}{}&\SI{3.23}{}&\SI{6.17}{}\\
&$S\to$\,$\tok{y}~\tok{\coloneqq}~E$&2&2&\SI{0.04}{}&\SI{3.22}{}&\SI{6.19}{}\\
&$B\to$\,$\lnot B$&3&3&\SI{0.02}{}&\SI{2.49}{}&\SI{5.26}{}\\
&$E\to$\,$E\mathbin{\tok{+}}E$&4&3&\SI{0.05}{}&\SI{8.52}{}&\SI{14.83}{}\\
&$E\to$\,$E\mathbin{\tok{-}}E$&5&2&\SI{0.13}{}&\SI{8.03}{}&\SI{13.83}{}\\
&$B\to$\,$E\mathbin{\tok{<}}E$&8&5&\SI{0.08}{}&\SI{7.5}{}&\SI{13.66}{}\\
&$B\to$\,$B \land B$&4&4&\SI{0.03}{}&\SI{5.33}{}&\SI{11.71}{}\\
&$B\to$\,$B \lor B$&4&4&\SI{0.05}{}&\SI{4.61}{}&\SI{8.99}{}\\
&$S\to$\,$S~\tok{;}~S$&5&3&\SI{4.55}{}&\SI{15.0}{}&\SI{72.53}{}\\
&$S\to$\,$\tok{do\_while}~S~B$&27&35&\SI{858.5}{}&\SI{257.33}{}&\SI{1374.13}{}\\
&$S\to$\,$\tok{while}~B~S$&9&7&\SI{16.88}{}&\SI{122.41}{}&\SI{266.8}{}\\
&$S\to$\,$\tok{ite}~B~S~S$&11&5&\SI{525.28}{}&\SI{33.88}{}&\SI{628.71}{}\\
\midrule
\pagebreak
\midrule
\multirow{16}{*}{\rotatebox[origin=c]{90}{\textsc{IntArith}}}&$E\to$\,$\tok{0}$&1&1&\SI{0.01}{}&\SI{0.01}{}&\SI{0.05}{}\\
&$E\to$\,$\tok{1}$&1&1&\SI{0.01}{}&\SI{0.01}{}&\SI{0.04}{}\\
&$E\to$\,$\tok{2}$&1&1&\SI{0.01}{}&\SI{0.01}{}&\SI{0.05}{}\\
&$E\to$\,$\tok{3}$&1&1&\SI{0.03}{}&\SI{0.04}{}&\SI{0.1}{}\\
&$B\to$\,$\tok{f}$&1&1&\SI{0.01}{}&\SI{0.02}{}&\SI{0.07}{}\\
&$B\to$\,$\tok{t}$&1&1&\SI{0.01}{}&\SI{0.03}{}&\SI{0.16}{}\\
&$E\to$\,$\tok{x}$&2&2&\SI{0.01}{}&\SI{0.03}{}&\SI{0.09}{}\\
&$E\to$\,$\tok{y}$&2&2&\SI{0.01}{}&\SI{0.02}{}&\SI{0.07}{}\\
&$E\to$\,$\tok{z}$&2&2&\SI{0.02}{}&\SI{0.03}{}&\SI{0.09}{}\\
&$B\to$\,$\lnot B$&4&4&\SI{0.06}{}&\SI{5.09}{}&\SI{15.22}{}\\
&$E\to$\,$E\mathbin{\tok{\times}}E$&3&3&\SI{1.38}{}&\SI{12.51}{}&\SI{22.58}{}\\
&$E\to$\,$E\mathbin{\tok{+}}E$&3&3&\SI{1.6}{}&\SI{11.42}{}&\SI{21.82}{}\\
&$B\to$\,$E\mathbin{\tok{<}}E$&6&6&\SI{0.73}{}&\SI{11.2}{}&\SI{26.87}{}\\
&$B\to$\,$B \land B$&5&5&\SI{0.08}{}&\SI{8.08}{}&\SI{14.95}{}\\
&$B\to$\,$B \lor B$&4&4&\SI{0.05}{}&\SI{7.7}{}&\SI{14.19}{}\\
&$E\to$\,$\tok{ite}~B~E~E$&4&4&\SI{0.78}{}&\SI{13.54}{}&\SI{31.0}{}\\
\midrule
\multirow{24}{*}{\rotatebox[origin=c]{90}{\textsc{IteExpr}}}&$G\to$\,$\tok{0}$&1&1&\SI{0.01}{}&\SI{0.01}{}&\SI{0.27}{}\\
&$G\to$\,$\tok{1}$&1&1&\SI{0.01}{}&\SI{0.01}{}&\SI{0.11}{}\\
&$G\to$\,$\tok{2}$&1&1&\SI{0.01}{}&\SI{0.01}{}&\SI{0.09}{}\\
&$G\to$\,$\tok{3}$&1&1&\SI{0.05}{}&\SI{0.01}{}&\SI{0.13}{}\\
&$G\to$\,$\tok{4}$&1&1&\SI{0.01}{}&\SI{0.01}{}&\SI{0.1}{}\\
&$G\to$\,$\tok{5}$&1&1&\SI{0.05}{}&\SI{0.01}{}&\SI{0.13}{}\\
&$G\to$\,$\tok{6}$&1&1&\SI{0.09}{}&\SI{0.01}{}&\SI{0.18}{}\\
&$G\to$\,$\tok{7}$&1&1&\SI{0.18}{}&\SI{0.01}{}&\SI{0.23}{}\\
&$G\to$\,$\tok{8}$&1&1&\SI{0.01}{}&\SI{0.01}{}&\SI{0.05}{}\\
&$G\to$\,$\tok{x}$&1&1&\SI{0.02}{}&\SI{0.01}{}&\SI{0.07}{}\\
&$E\to$\,$\tok{atom}~F$&2&2&\SI{0.03}{}&\SI{0.25}{}&\SI{0.55}{}\\
&$S\to$\,$\tok{expr}~E$&1&1&\SI{0.02}{}&\SI{0.1}{}&\SI{0.2}{}\\
&$F\to$\,$\tok{num}~G$&1&1&\SI{0.03}{}&\SI{0.3}{}&\SI{0.69}{}\\
&$F\to$\,$F\mathbin{\tok{\times}}G$&2&2&\SI{1.19}{}&\SI{0.77}{}&\SI{3.33}{}\\
&$E\to$\,$E\mathbin{\tok{+}}F$&2&2&\SI{1.25}{}&\SI{0.25}{}&\SI{1.79}{}\\
&$E\to$\,$E\mathbin{\tok{-}}F$&2&2&\SI{1.12}{}&\SI{0.27}{}&\SI{1.68}{}\\
&$F\to$\,$F\mathbin{\tok{\div}}G$&4&3&\SI{1.92}{}&\SI{0.94}{}&\SI{3.88}{}\\
&$B\to$\,$E\mathbin{\tok{=}}E$&5&4&\SI{0.09}{}&\SI{0.24}{}&\SI{0.71}{}\\
&$B\to$\,$E\mathbin{\tok{\ge}}E$&5&5&\SI{1.79}{}&\SI{0.33}{}&\SI{2.6}{}\\
&$B\to$\,$E\mathbin{\tok{>}}E$&5&5&\SI{0.18}{}&\SI{0.26}{}&\SI{0.79}{}\\
&$B\to$\,$E\mathbin{\tok{\le}}E$&6&6&\SI{0.24}{}&\SI{0.56}{}&\SI{1.48}{}\\
&$B\to$\,$E\mathbin{\tok{<}}E$&5&5&\SI{0.12}{}&\SI{0.3}{}&\SI{0.85}{}\\
&$B\to$\,$E\mathbin{\tok{\neq}}E$&6&6&\SI{5.35}{}&\SI{0.26}{}&\SI{6.11}{}\\
&$S\to$\,$\tok{ite}~B~E~E$&3&3&\SI{0.29}{}&\SI{0.29}{}&\SI{0.92}{}\\
\midrule
\multirow{18}{*}{\rotatebox[origin=c]{90}{$\textsc{ImpArr}$}}&$E\to$\,$\tok{0}$&1&1&\SI{0.01}{}&\SI{0.01}{}&\SI{0.04}{}\\
&$E\to$\,$\tok{1}$&1&1&\SI{0.01}{}&\SI{0.01}{}&\SI{0.04}{}\\
&$B\to$\,$\tok{f}$&1&1&\SI{0.01}{}&\SI{0.01}{}&\SI{0.05}{}\\
&$B\to$\,$\tok{t}$&1&1&\SI{0.01}{}&\SI{0.01}{}&\SI{0.1}{}\\
&$S\to$\,$\tok{\mathtt{dec\_var}}_i$&3&2&\SI{4.29}{}&\SI{0.56}{}&\SI{5.36}{}\\
&$S\to$\,$\tok{\mathtt{inc\_var}}_i$&3&2&\SI{3.56}{}&\SI{0.54}{}&\SI{4.51}{}\\
&$B\to$\,$\lnot B$&3&2&\SI{0.01}{}&\SI{1.42}{}&\SI{5.53}{}\\
&$E\to$\,$\tok{\mathtt{var}}_i$&3&2&\SI{0.01}{}&\SI{0.28}{}&\SI{0.66}{}\\
&$E\to$\,$E\mathbin{\tok{+}}E$&3&2&\SI{0.02}{}&\SI{6.51}{}&\SI{12.38}{}\\
&$E\to$\,$E\mathbin{\tok{-}}E$&3&2&\SI{0.01}{}&\SI{6.43}{}&\SI{12.13}{}\\
&$B\to$\,$E\mathbin{\tok{<}}E$&4&3&\SI{0.01}{}&\SI{3.38}{}&\SI{10.33}{}\\
&$B\to$\,$B \land B$&4&3&\SI{0.06}{}&\SI{2.36}{}&\SI{6.23}{}\\
&$S\to$\,$\tok{\mathtt{var}}_i \gets E$&2&1&\SI{0.03}{}&\SI{8.11}{}&\SI{11.6}{}\\
&$B\to$\,$B \lor B$&5&4&\SI{0.03}{}&\SI{2.42}{}&\SI{6.14}{}\\
&$S\to$\,$S~\tok{;}~S$&3&1&\SI{0.02}{}&\SI{13.88}{}&\SI{25.91}{}\\
&$S\to$\,$\tok{do\_while}~S~B$&5&2&\SI{0.22}{}&\SI{342.25}{}&\SI{499.11}{}\\
&$S\to$\,$\tok{while}~B~S$&4&2&\SI{0.1}{}&\SI{218.66}{}&\SI{321.11}{}\\
&$S\to$\,$\tok{ite}~B~S~S$&4&2&\SI{0.03}{}&\SI{7.08}{}&\SI{27.82}{}\\
\midrule
\multirow{10}{*}{\rotatebox[origin=c]{90}{$\textsc{RegEx}(2)$}}&$R\to$\,$\tok{?}$&3&3&\SI{3.84}{}&\SI{0.07}{}&\SI{4.07}{}\\
&$R\to$\,$\tok{a}$&4&4&\SI{11.1}{}&\SI{0.07}{}&\SI{11.53}{}\\
&$R\to$\,$\tok{b}$&5&5&\SI{11.63}{}&\SI{0.06}{}&\SI{12.01}{}\\
&$R\to$\,$\tok{\epsilon}$&1&1&\SI{0.07}{}&\SI{0.07}{}&\SI{2.38}{}\\
&$R\to$\,$\tok{\emptyset}$&1&1&\SI{0.19}{}&\SI{0.07}{}&\SI{0.46}{}\\
&$Start\to$\,$\tok{eval}~R$&3&3&\SI{0.02}{}&\SI{4.43}{}&\SI{13.4}{}\\
&$R\to$\,$\mathord{\tok{!}}R$&5&5&\SI{2.85}{}&\SI{15.77}{}&\SI{77.36}{}\\
&$R\to$\,$R^{\tok{*}}$&6&6&\SI{0.99}{}&\SI{13.06}{}&\SI{31.91}{}\\
&$R\to$\,$R \mathbin{\tok{\cdot}} R$&24&24&\SI{333.71}{}&\SI{72.58}{}&\SI{495.45}{}\\
&$R\to$\,$R \mathbin{\tok{\mid}} R$&10&10&\SI{10.96}{}&\SI{59.54}{}&\SI{140.82}{}\\
\midrule
\pagebreak
\midrule
\multirow{9}{*}{\rotatebox[origin=c]{90}{\textsc{BinOp}}}&$B\to$\,$\tok{0}$&1&1&\SI{0.01}{}&\SI{0.01}{}&\SI{0.07}{}\\
&$B\to$\,$\tok{1}$&1&1&\SI{0.01}{}&\SI{0.01}{}&\SI{0.22}{}\\
&$B\to$\,$\tok{x}$&2&2&\SI{0.01}{}&\SI{0.01}{}&\SI{0.08}{}\\
&$N\to$\,$\tok{atom}~B$&2&2&\SI{0.09}{}&\SI{0.04}{}&\SI{0.3}{}\\
&$M\to$\,$\tok{atom'}~B$&3&3&\SI{0.07}{}&\SI{0.05}{}&\SI{0.26}{}\\
&$S\to$\,$\tok{bin2dec}~M$&2&2&\SI{0.02}{}&\SI{0.09}{}&\SI{0.3}{}\\
&$S\to$\,$\tok{count}~N$&2&2&\SI{0.04}{}&\SI{0.05}{}&\SI{0.24}{}\\
&$N\to$\,$\tok{concat}~N~B$&5&5&\SI{8.61}{}&\SI{0.22}{}&\SI{10.31}{}\\
&$M\to$\,$\tok{concat'}~M~B$&5&5&\SI{288.81}{}&\SI{0.23}{}&\SI{308.5}{}\\
\midrule
\multirow{13}{*}{\rotatebox[origin=c]{90}{\textsc{Currency}}}&$K\to$\,$\tok{0}$&1&1&\SI{0.01}{}&\SI{0.01}{}&\SI{0.03}{}\\
&$K\to$\,$\tok{1}$&1&1&\SI{0.01}{}&\SI{0.01}{}&\SI{0.02}{}\\
&$K\to$\,$\tok{2}$&1&1&\SI{0.01}{}&\SI{0.01}{}&\SI{0.02}{}\\
&$K\to$\,$\tok{4}$&1&1&\SI{0.01}{}&\SI{0.01}{}&\SI{0.02}{}\\
&$K\to$\,$\tok{8}$&1&1&\SI{0.01}{}&\SI{0.01}{}&\SI{0.02}{}\\
&$K\to$\,$\tok{x}$&2&2&\SI{0.01}{}&\SI{0.01}{}&\SI{0.1}{}\\
&$S\to$\,$\tok{cny}~K$&3&3&\SI{21.55}{}&\SI{0.28}{}&\SI{22.48}{}\\
&$S\to$\,$\tok{jpy}~K$&1&1&\SI{0.01}{}&\SI{0.1}{}&\SI{0.22}{}\\
&$S\to$\,$\tok{usd}~K$&2&2&\SI{19.65}{}&\SI{0.21}{}&\SI{20.27}{}\\
&$S\to$\,$S\mathbin{\tok{\times}}K$&2&2&\SI{0.03}{}&\SI{0.11}{}&\SI{0.24}{}\\
&$S\to$\,$S\mathbin{\tok{+}}S$&2&2&\SI{0.03}{}&\SI{0.14}{}&\SI{0.33}{}\\
&$S\to$\,$S\mathbin{\tok{-}}S$&2&2&\SI{0.04}{}&\SI{0.16}{}&\SI{0.38}{}\\
&$K\to$\,$K \mathbin{\tok{+_K}} K$&3&3&\SI{0.28}{}&\SI{0.35}{}&\SI{1.19}{}\\
\bottomrule
\end{longtable}
\endgroup
}

\end{document}